\let\csname equation*\endcsname\relax
\let\csname endequation*\endcsname\relax
\newtheorem{proposition}{Proposition} 
\newtheorem{theorem}{Theorem}
\newtheorem{corollary}{Corollary}
\newtheorem{lemma}{Lemma}
\def\cH{\mathcal{H}}
\def\cT{\mathcal{T}}
\def\C{\mathbb{C}}
\def\Pr{\mathbb{P}}
\DeclareMathOperator{\id}{id}
\def\Label#1{\label{#1}\ [\ \text{#1}\ ]\ }
\def\Label{\label}
\theoremstyle{definition}
\newtheorem*{example*}{Example}
\newtheorem{remark}{Remark}
\newcommand{\ketbra}[2]{\ket{#1}\!\bra{#2}}
\DeclareMathOperator{\Ker}{Ker}
\begin{document}

\title[Hidden Markovian Process with Quantum Hidden System]{Asymptotic and Non-Asymptotic Analysis for Hidden Markovian Process with Quantum Hidden System}

\author{Masahito Hayashi$^{1,2,3}$ and Yuuya Yoshida$^{1}$}
\address{$^1$ Graduate School of Mathematics, Nagoya University, 
	Nagoya, 464-8602, Japan}
\address{$^2$ Shenzhen Institute for Quantum Science and Engineering, Southern University of Science and Technolog, Nanshan District, Shenzhen, 518055, China}
\address{$^3$ Centre for Quantum Technologies, National University of Singapore, 
	3 Science Drive 2, 117542, Singapore}

\ead{masahito@math.nagoya-u.ac.jp and m17043e@math.nagoya-u.ac.jp}

\bigskip
\begin{indented}
\item 14 July 2018
\end{indented}

\begin{abstract}
We focus on a data sequence produced by 
repetitive quantum measurement on an internal hidden quantum system,
and call it a hidden Markovian process.
Using a quantum version of the Perron-Frobenius theorem, we derive novel upper and lower bounds
for the cumulant generating function of the sample mean of the data.
Using these bounds, we derive the central limit theorem and large and moderate deviations for the tail probability.
Then, we give the asymptotic variance is given by using the second derivative of the cumulant generating function.
We also derive another expression for the asymptotic variance
by considering the quantum version of the fundamental matrix.
Further, we explain how to extend our results to a general probabilistic system.
\end{abstract}

%
\noindent{\it Keywords}: 
quantum system, hidden Markov, central limit theorem, large deviation, moderate deviation, asymptotic variance

\section{Introduction}
Consider a physical system with an internal quantum system.
Usually, it is not so easy to observe the internal system, directly.
When the physical system has a classical output,
we can observe the classical output and other parts cannot be observed.
For example, a quantum random number generator has 
such an internal system and a classical output \cite{HG}.
As another example, such a system appears in a quantum memory of a channel \cite{CGLM,Pascal,Kretchmann}.
Such a correlated system also appears in quantum spin chains \cite{Ogata,BTG}.
This kind of system is formulated to be a hidden Markovian process with quantum hidden system as Fig.~\ref{F1}.
Originally, a classical Markovian process is formulated as
a probability transition matrix.
Then, a classical hidden Markovian process is formulated as
two probability transition matrices,
in which, one describes the Markovian process on the hidden system,
and the other describes the relation between the hidden system and the observed system.

While there are several formulations of quantum analogue of Markovian process,
one natural formulation is a trace-preserving completely positive map (TP-CP map).
However, in this formulation, nobody observes the system, i.e.,
no observation of the quantum system is discussed.
To introduce a measurability on this system, we need to introduce 
a hidden Markovian process with quantum hidden system.
When the quantum system can be measured, the resultant state depends on the classical output.
That is, the state evolution depends on the classical output $\omega \in \Omega$,
and is described by a set of CP maps $C_\omega$, which is often called
an instrument \cite{Ozawa}.
In this case, the sum $\sum_{\omega \in \Omega} C_\omega$ needs to be trace-preserving.
When the initial state is $\rho$, the classical output $\omega$ is observed
with the probability $\Tr C_\omega(\rho)$ so that the resultant state 
is $C_\omega(\rho)/\Tr C_\omega(\rho) $.
Such a system is initially formulated as a quantum measuring process \cite{Ozawa}
and the set $\{C_\omega\}_{\omega \in \Omega}$ is called an instrument.

In the classical case, there are so many studies for Markovian process.
These studies focus on the random variables $X_i$ generated subject to this process
and consider the sample mean $X^n:= (1/n)\sum_{i=1}^n X_i$.
Similar to the independently and identically distributed case,
the sample mean $X^n$ converges the expectation in probability.
Also, the central limit theorem holds for the sample mean \cite{CLT2,CLT3,CLT4,Ben-Ari}.
Further, the large and moderate deviations also hold \cite{R2} \cite[Theorem~3.1.2]{DZ} \cite[Corollaries 8.3 and 8.4]{W-H}.
However, the non-asymptotic analysis has not been discussed sufficiently.
While in the non-asymptotic analysis, we derive upper and lower bounds for the tail probability,
we need to consider requirements for a good bound because we need to distinguish good bounds from trivial bounds.
Similarly to \cite{Markov-HW}, we impose the following requirements on good bounds.
\begin{description}
\item[Computational complexity] 
In order that the bound works efficiently, we need to calculate the bound efficiently.
For this aim, we need to clarify the computational complexity to calculate the bound,
and the complexity needs to be polynomial with respect to the number $n$ of observation, at least.
\item[Asymptotic tightness]
The bound needs to achieve the optimality in one of the following regimes.
\begin{description}
\item[C1] Large deviation
\item[C2] Moderate deviation
\item[C3] Central limit theorem
\end{description}
\end{description}
The paper \cite{W-H} derived upper and lower bounds for the tail probability, which 
have the computational complexity $O(1)$
and achieve asymptotic tightness in the sense of C1 and C2.
These upper and lower bounds were derived from 
the evaluation of the cumulant generating function.
Large deviation considers the event that the difference between the sample mean and the expectation is greater than 
a certain threshold when the threshold is a constant.
That is, in the large deviation, the event of our interest is largely deviated.
Then, the event of large deviation has exponentially small probability.
Moderate deviation discusses the event when the threshold is a constant is larger than that in the central limit regime but goes to zero.
That is, in the moderate deviation, the event of our interest is moderately deviated.
Since the event of the central limit theorem converges to a constant,
the event of the moderate deviation can be regarded as 
the intermediate situation.

In the quantum setting, 
the paper \cite{H-G} derived the large deviation in a similar setting.
The papers \cite{Ogata,BTG} discussed the large deviation in quantum spin chains.
However, no study derived moderate deviation and upper and lower bounds to satisfy the above requirements.
In addition, the paper \cite{GK} addressed local asymptotic normality
in the context of system identification for quantum Markov chains.
But, it did not discuss the central limit theorem of the sample mean when the hidden system is given as a quantum system.

\begin{figure}
\begin{center}
\scalebox{1}{\includegraphics[scale=0.5]{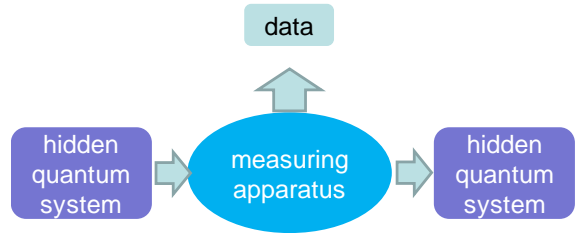}}
\end{center}

\caption{hidden Markovian process with quantum hidden system}
\Label{F1}
\end{figure}%

In this paper, we consider the extension of the upper and lower bounds by \cite{W-H}
to a hidden Markovian process with quantum hidden system.
That is, when a real-valued variable $X_i$ is generated subject to the process,
we focus on the sample mean $X^n:= (1/n)\sum_{i=1}^n X_i$, 
and discuss the asymptotic behavior.
More precisely, we derive the central limit theorem, i.e., we show that
the random variable 
$\sqrt{n}(X^n - E) $ converges to the Gaussian distribution, where $E$ is the expectation value.
Next, we focus on the tail probability of this process.
We derive large and moderate deviations for the tail probability.
In addition, we derive a finite-length evaluation for the tail probability,
i.e., upper and lower bounds of the tail probability
that derives the large and moderate deviations for the tail probability.
In these derivations, we first focus on the quantum version of the Perron-Frobenius theorem \cite{Schrader,Wolf}, 
which characterizes the Perron-Frobenius eigenvalue.
Using the Perron-Frobenius eigenvalue, we derive upper and lower bounds of cumulant generating function of the sample mean, whose computational complexity is $O(1)$.
Then, employing the same method as \cite{W-H}, we show the
asymptotic tightness in the sense of C1 and C2.
 Further, we derive the central limit theorem and calculate the asymptotic variance.

The remaining part of this paper is organized as follows.
Sections~\ref{qPF} and \ref{S2} are devoted for mathematical preparations.
Section~\ref{SS5} explains Bregman divergence and its variants, which are powerful tools for our purpose.
Section~\ref{s13} gives the central limit theorem.
Section~\ref{s14} derives upper and lower bounds for the tail probability.
These bounds achieve the tightness in the sense of the large and moderate deviations.
Section~\ref{cal} gives the concrete form of the variance, which appears in 
the moderate deviation and central limit theorem. 
Section~\ref{S8} explains that our model is equivalent to the model of finitely correlated states discussed in \cite{Ogata}.

\section{Perron-Frobenius theorem for quantum systems}\label{qPF}

As a preparation, we summarize basic knowledge for 
a quantum version of the Perron-Frobenius theorem on a finite-dimensional quantum system $\cH$ of interest.
For this aim, we explain the results of \cite{Yoshida}
in the case of completely positive maps. 
First, let us define a few terms used in linear algebra. 
The spectral radius $r(\Lambda)$ of a linear map $\Lambda$ is defined as 
the maximum of the absolute values of all eigenvalues of $\Lambda$ on the linear space $\cT(\cH)$ of all Hermitian matrices on $\cH$ \cite{Wolf}. 
The relations $r(\Lambda_1\otimes\Lambda_2)=r(\Lambda_1)r(\Lambda_2)$ and $r(\Lambda^\ast)=r(\Lambda)$ hold, 
where $\Lambda^\ast$ denotes the adjoint map of a linear map $\Lambda$. 
For a linear map $\Lambda$ and its eigenvalue, 
the multiplicity as a root of the characteristic polynomial is called the algebraic multiplicity, 
and the dimension of the eigenspace is called the geometric multiplicity \cite{Yoshida}. 
The spectral radius plays a special role as follows.
Then, we have the following proposition.

\begin{proposition}[{\cite[Theorem~6]{Yoshida}}]\label{qirr}
	For any completely positive map $\Lambda$, the following conditions are equivalent.
	\renewcommand{\theenumi}{I-\roman{enumi}}
	\begin{enumerate}
		\item\label{mean}
		{The inequality $r(\Lambda)>0$ holds and} 
		there exist strictly positive definite matrices $\rho_0$ and $A_0$ such that 
		$\Tr\rho_0 A_0 = 1$, and any Hermitian matrix $H$ satisfies 
		\begin{equation*}
			\lim_{n\to\infty} \frac{1}{n} \sum_{k=0}^{n-1} (r(\Lambda)^{-1} \Lambda)^k(H) = (\Tr A_0 H)\rho_0.
		\end{equation*}
		\item\label{evec2}
		Both $\Lambda$ and $\Lambda^\ast$ have strictly positive definite eigenvectors associated with $r(\Lambda)>0${, and} 
		the eigenvalue $r(\Lambda)$ of $\Lambda$ has the geometric multiplicity $1$.
		\item\label{exp}
		For any state $\rho>0$, there exists a positive number $t$ such that $e^{t\Lambda}(\rho)>0$.
		\item\label{improve}
		Any state $\rho$ satisfies $(\iota + \Lambda)^{(\dim\mathcal{H})^2 - 1}(\rho)>0$.
		\item\label{ineq}
		If a state $\rho$ and a nonnegative number $\alpha$ satisfy $\Lambda(\rho)\le\alpha\rho$, 
		then $\rho>0$.
		\item\label{evec1}
		$\Lambda$ has no eigenvectors on the boundary of the set of all positive semi-definite matrices.
		\item[(\ref{exp})$'$]
		For any states $\rho$ and $\sigma$ whose ranks equal one, 
		there exists a natural number $n$ such that $\Tr\sigma\Lambda^n(\rho)>0$.
	\end{enumerate}
	Here, for two Hermitian matrices $H$ and $H'$ on $\mathcal{H}$, 
	the relations $H\le H'$ and $H<H'$ mean that $H'-H$ is positive semi-definite and strictly positive definite, respectively. 
	$\iota$ denotes the identity map on $\mathcal{T}(\mathcal{H})$.
\end{proposition}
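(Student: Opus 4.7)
The plan is to prove the equivalence through a cyclic chain of implications organized around condition (I-ii), which is the classical algebraic formulation of Perron--Frobenius irreducibility. A convenient ordering is (I-ii) $\Rightarrow$ (I-i) $\Rightarrow$ (I-vi) $\Rightarrow$ (I-v) $\Rightarrow$ (I-iv) $\Rightarrow$ (I-iii) $\Rightarrow$ (I-iii)$'$ $\Rightarrow$ (I-ii). The analytic content is concentrated in the first two implications, while the remaining ones are positivity-preservation arguments on the cone of positive semi-definite matrices in $\cT(\cH)$.

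For (I-ii) $\Rightarrow$ (I-i), I would normalize $\Lambda$ by $r(\Lambda)^{-1}$ and decompose the resulting map on $\cT(\cH)$ into its Jordan blocks. Given strictly positive eigenvectors $\rho_0$ of $\Lambda$ and $A_0$ of $\Lambda^\ast$ with $\Tr\rho_0 A_0 = 1$ and geometric multiplicity one at $r(\Lambda)$, I would first upgrade to algebraic multiplicity one: a generalized eigenvector $\rho_1$ with $\Lambda(\rho_1) = r(\Lambda)(\rho_1 + \rho_0)$ would, after pairing with $A_0$, force $\Tr A_0 \rho_0 = 0$, contradicting strict positivity. Then I would argue that Cesaro averaging annihilates all peripheral eigenvalues different from $r(\Lambda)$ (being roots of unity whose powers sum to zero) as well as all strictly subdominant contributions. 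What survives is the rank-one spectral projector $H \mapsto (\Tr A_0 H)\rho_0$, which is exactly the stated limit.

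The positivity chain proceeds as follows. For (I-i) $\Rightarrow$ (I-vi), any boundary eigenvector $\rho$ of $\Lambda$ would, under the Cesaro limit, be mapped to $(\Tr A_0\rho)\rho_0$, a strictly positive matrix, forcing $\rho = 0$. For (I-vi) $\Rightarrow$ (I-v), a state $\rho$ with $\Lambda(\rho) \le \alpha\rho$ produces, via an infimum over admissible $\alpha$, a boundary eigenvector unless $\rho$ was already strictly positive. For (I-v) $\Rightarrow$ (I-iv), if $(\iota+\Lambda)^{(\dim\cH)^2 - 1}(\rho)$ fails to be strictly positive, its support spans a proper $\Lambda$-invariant face of the positive cone, whose extremal element violates (I-v); the exponent $\dim\cT(\cH) - 1$ enters through a Wielandt-type iteration showing that primitivity is achieved within that many steps. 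Finally (I-iv) $\Rightarrow$ (I-iii) $\Rightarrow$ (I-iii)$'$ follows from the power-series expansion of $e^{t\Lambda}$ and the observation that $\Tr\sigma\Lambda^n(\rho)$ reads off a matrix element of $\Lambda^n$.

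The hardest step will be closing the loop (I-iii)$'$ $\Rightarrow$ (I-ii), since it requires manufacturing strictly positive Perron--Frobenius eigenvectors and geometric multiplicity one from only a scattered accessibility condition. My plan is to first obtain a state fixed by $\rho \mapsto \Lambda(\rho)/\Tr\Lambda(\rho)$ via a Brouwer fixed-point argument on the compact convex state space, which simultaneously yields $r(\Lambda)>0$. I would then upgrade the candidate $\rho_0$ to full rank by considering a resolvent-type sum $\sum_{n=0}^{N}\epsilon^n \Lambda^n(\rho_0)$ and invoking (I-iii)$'$ to guarantee that, for any pure state $\sigma$, at least one term contributes positive trace, so the sum has full support for large enough $N$; applying the same argument to $\Lambda^\ast$ produces $A_0$. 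Geometric multiplicity one is then obtained by contradiction: a second linearly independent eigenvector, combined appropriately with $\rho_0$, would produce a boundary eigenvector of $\Lambda$, contradicting the strict positivity already established.
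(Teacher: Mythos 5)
You should first note that the paper itself contains no proof of this proposition: it is imported verbatim as Theorem~6 of the cited work by Yoshida and Hayashi, where it is proved in the more general setting of maps preserving a closed convex cone. So there is no in-paper argument to compare against, and your proposal has to stand on its own. Its architecture --- a cycle through the cone conditions with the analytic work concentrated in (I-ii) $\Rightarrow$ (I-i) and in the accessibility-to-eigenvector step (I-iii)$'$ $\Rightarrow$ (I-ii) --- is the standard Perron--Frobenius route, and most steps can indeed be completed as you indicate.

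However, several steps have genuine gaps as written. (a) In (I-ii) $\Rightarrow$ (I-i), the Ces\`aro limit exists only if every peripheral eigenvalue (of modulus $r(\Lambda)$) is semisimple; your remark that averaging kills ``roots of unity whose powers sum to zero'' silently assumes both that the peripheral spectrum consists of such points and that it carries no Jordan blocks, whereas a nontrivial Jordan block at a peripheral eigenvalue $\lambda\neq r(\Lambda)$ makes the Ces\`aro means oscillate without converging. The standard repair uses the strictly positive eigenvector $A_0$ of $\Lambda^\ast$: conjugating $r(\Lambda)^{-1}\Lambda^\ast$ by $X\mapsto A_0^{1/2}XA_0^{1/2}$ yields a unital positive, hence norm-contractive, map, so $r(\Lambda)^{-1}\Lambda$ is power bounded and its peripheral spectrum is semisimple; this argument is missing. (b) In (I-vi) $\Rightarrow$ (I-v), ``an infimum over admissible $\alpha$'' does not produce an eigenvector; what works is that $\Lambda(\rho)\le\alpha\rho$ forces the face of positive semi-definite matrices supported on $\mathrm{supp}\,\rho$ to be $\Lambda$-invariant, and then a fixed-point (eigenvector-in-invariant-cone) argument inside that face yields a boundary eigenvector contradicting (I-vi). (c) In (I-iii)$'$ $\Rightarrow$ (I-ii) you still need to identify the Brouwer eigenvalue $c$ with $r(\Lambda)$ (sandwiching $X\le t\rho_0$ gives $r(\Lambda)\le c$, and $c\le r(\Lambda)$ since $c$ is an eigenvalue), and note that your ``resolvent sum'' applied to an eigenvector is merely a scalar multiple of it --- what the argument really shows is that under (I-iii)$'$ every positive semi-definite eigenvector with positive eigenvalue has full support; that reading also fixes the multiplicity-one step, where the boundary eigenvector built from a second eigenvector must be refuted by (I-iii)$'$ itself, not by ``strict positivity already established'' (condition (I-vi) is not yet available at that point of the cycle). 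Finally, (I-iii) $\Rightarrow$ (I-iii)$'$ requires applying (I-iii) to rank-one states, so you should read (I-iii) as quantifying over all states (as printed, with $\rho>0$, it is vacuous for $\Lambda=0$) and state that reading explicitly.
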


A $d \times d$ nonnegative matrix $W$ is called irreducible 
when for any $i,j$, there exists a natural number $n$ such that 
{the $(i,j)$ component $(W^n)_{i,j}$ of $W^n$ is positive.}
It can be easily shown that this condition is equivalent to 
the condition that $(e^W)_{i,j}>0$ for any $i,j$.
Hence, Condition (\ref{exp}) can be regarded as quantum extension of the irreducibility.
Hence, a completely positive map $\Lambda$ on $\mathcal{T}(\mathcal{H})$ 
is called irreducible when at least one condition in Proposition \ref{qirr} holds.
The value $r(\Lambda)$ and the matrix $\rho_0$ in Condition (\ref{mean}) are called 
the Perron-Frobenius eigenvalue and a Perron-Frobenius eigenvector of $\Lambda$, respectively. 
It can be shown that the irreducibility of $\Lambda$ implies that of $\Lambda^\ast$, and 
that the two matrices $\rho_0$ and $A_0$ in Condition (\ref{mean}) are eigenvectors of $\Lambda$ and $\Lambda^\ast$, respectively \cite{Yoshida}.
Condition (\ref{exp}) is called ergodicity in \cite{Schrader}.

A irreducible completely positive map $\Lambda$ satisfies Condition (\ref{mean}) 
but $(r(\Lambda)^{-1}\Lambda)^n(\rho)$ might periodically behave as $n\to\infty$. 
Combining Corollary 3 and Theorem~6 of \cite{Yoshida}, we obtain the following conditions that exclude the periodicity.

\begin{proposition}\label{qprim}
	For any completely positive map $\Lambda$, the following conditions are equivalent.
	\renewcommand{\theenumi}{P-\roman{enumi}}
	\begin{enumerate}
		\item\label{p-mean}
		{The inequality $r(\Lambda)>0$ holds and} 
		there exist strictly positive definite matrices $\rho_0$ and $A_0$ such that 
		$\Tr\rho_0 A_0 = 1$, and any Hermitian matrix $H$ satisfies 
		\begin{equation*}
			\lim_{n\to\infty} (r(\Lambda)^{-1} \Lambda)^n(H) = (\Tr A_0 H)\rho_0.
		\end{equation*}
		\item\label{p-evec2}
		Both $\Lambda^{\otimes2}$ and $(\Lambda^{\otimes2})^\ast$ have strictly positive definite eigenvectors associated with $r(\Lambda)^2>0${, and} 
		the eigenvalue $r(\Lambda)^2$ of $\Lambda^{\otimes2}$ has the geometric multiplicity $1$.
		\item\label{p-exp}
		For any state $\rho>0$, there exists a positive number $t$ such that $e^{t\Lambda^{\otimes2}}(\rho)>0$.
		\item\label{p-improve}
		Any state $\rho$ satisfies $(\iota^{\otimes2} + \Lambda^{\otimes2})^{(\dim\mathcal{H})^4 - 1}(\rho)>0$.
		\item\label{p-ineq}
		If a state $\rho$ and a nonnegative number $\alpha$ satisfy $\Lambda^{\otimes2}(\rho)\le\alpha\rho$, 
		then $\rho>0$.
		\item\label{p-evec1}
		$\Lambda^{\otimes2}$ has no eigenvectors on the boundary of the set of all positive semi-definite matrices.
		\item[(\ref{exp})$'$]
		For any states $\rho$ and $\sigma$ whose ranks equal one, 
		there exists a natural number $n$ such that $\Tr\sigma(\Lambda^{\otimes2})^n(\rho)>0$.
	\end{enumerate}
\end{proposition}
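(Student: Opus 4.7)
The plan is to reduce Proposition~\ref{qprim} to Proposition~\ref{qirr} applied to the completely positive map $\Lambda^{\otimes 2}$. Two identities are used repeatedly: $r(\Lambda^{\otimes 2})=r(\Lambda)^2$ (which is recorded just before Proposition~\ref{qirr}) and $\dim(\mathcal{H}\otimes\mathcal{H})=(\dim\mathcal{H})^2$. With these substitutions, conditions (P-ii)--(P-vii) are \emph{verbatim} the conditions (I-ii)--(I-vii) of Proposition~\ref{qirr} for the map $\Lambda^{\otimes 2}$; hence their mutual equivalence is immediate, and together they state exactly that $\Lambda^{\otimes 2}$ is irreducible. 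The remaining task is to link this with the direct convergence statement (P-i).

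For (P-i)~$\Rightarrow$~(P-ii), I would evaluate the limit in (P-i) on simple tensors $B_1\otimes B_2$, use $\Lambda^{\otimes 2}(B_1\otimes B_2)=\Lambda(B_1)\otimes\Lambda(B_2)$ on each factor, and extend by linearity to obtain
\begin{equation*}
\lim_{n\to\infty}(r(\Lambda)^{-2}\Lambda^{\otimes 2})^n(X)=\Tr\bigl((A_0\otimes A_0)X\bigr)\,\rho_0\otimes\rho_0
\end{equation*}
for every Hermitian $X$ on $\mathcal{H}\otimes\mathcal{H}$. This exhibits $\rho_0\otimes\rho_0$ and $A_0\otimes A_0$ as strictly positive definite eigenvectors of $\Lambda^{\otimes 2}$ and $(\Lambda^{\otimes 2})^\ast$ for the eigenvalue $r(\Lambda)^2$, and the rank-one form of the limit map forces the geometric multiplicity of that eigenvalue to equal $1$, which is exactly (P-ii).

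The converse (P-ii)~$\Rightarrow$~(P-i) is the heart of the matter, because applying Proposition~\ref{qirr}.(I-i) to $\Lambda^{\otimes 2}$ only yields a \emph{Cesaro} limit, whereas (P-i) demands direct convergence. My approach is to rule out peripheral eigenvalues of $\Lambda$ other than $r(\Lambda)$: if $\Lambda(X)=\lambda X$ with $|\lambda|=r(\Lambda)$ but $\lambda\neq r(\Lambda)$, then $\Lambda$ being Hermitian-preserving gives $\Lambda(X^\ast)=\bar\lambda X^\ast$, so $Y:=X\otimes X^\ast+X^\ast\otimes X$ is a Hermitian eigenvector of $\Lambda^{\otimes 2}$ for $|\lambda|^2=r(\Lambda)^2$. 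Since eigenvectors of $\Lambda$ attached to distinct eigenvalues are linearly independent, $Y$ is not a scalar multiple of $\rho_0\otimes\rho_0$, contradicting the geometric simplicity in (P-ii). An analogous argument with a generalised eigenvector $\Lambda(Z)=r(\Lambda)Z+X$ rules out non-trivial Jordan structure at $r(\Lambda)$. With both obstructions removed, the Jordan decomposition of $\Lambda$ gives the direct convergence $(r(\Lambda)^{-1}\Lambda)^n(H)\to(\Tr A_0 H)\rho_0$, which is (P-i). The main obstacle is precisely this Jordan-block analysis: translating the single statement ``$\Lambda^{\otimes 2}$ is irreducible'' into the simultaneous absence of phase obstructions and nilpotent obstructions on the peripheral spectrum of $\Lambda$.
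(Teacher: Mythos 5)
Your reduction of (\ref{p-evec2})--(\ref{p-evec1}) to Proposition~\ref{qirr} applied to $\Lambda^{\otimes2}$ (via $r(\Lambda^{\otimes2})=r(\Lambda)^2$ and $\dim(\mathcal{H}\otimes\mathcal{H})=(\dim\mathcal{H})^2$) is exactly the route the paper takes, and your (\ref{p-mean})$\Rightarrow$(\ref{p-evec2}) tensoring argument is sound. Note, however, that the paper does not prove the remaining link itself: it invokes Corollary~3 of \cite{Yoshida} (``$\Lambda$ is primitive iff $\Lambda^{\otimes2}$ is irreducible'') together with Theorem~6 of \cite{Yoshida}. So in the converse direction you are in effect reproving that corollary by a peripheral-spectrum/Jordan analysis, which is a legitimate and more self-contained route --- but as written it has a gap.

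The gap is in (\ref{p-evec2})$\Rightarrow$(\ref{p-mean}): you use $\rho_0$ and $A_0$ as if they were already known to be strictly positive definite eigenvectors of $\Lambda$ and $\Lambda^\ast$ at $r(\Lambda)$, both when you contrast $Y=X\otimes X^\ast+X^\ast\otimes X$ with $\rho_0\otimes\rho_0$ and in the final step ``the Jordan decomposition of $\Lambda$ gives $(r(\Lambda)^{-1}\Lambda)^n(H)\to(\Tr A_0H)\rho_0$.'' But Condition (\ref{p-evec2}) only supplies such eigenvectors for $\Lambda^{\otimes2}$, while (\ref{p-mean}) explicitly demands strictly positive definite $\rho_0,A_0$ for $\Lambda$ itself; ruling out peripheral phases and Jordan blocks only gives convergence of $(r(\Lambda)^{-1}\Lambda)^n$ to \emph{some} rank-one spectral projection, not one of the stated positive form. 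You must show that positivity descends from $\Lambda^{\otimes2}$ to $\Lambda$. This is fixable in a line or two: a positive map always has a positive semi-definite eigenvector $\rho$ at $r(\Lambda)$; then $\rho\otimes\rho$ lies in the $r(\Lambda)^2$-eigenspace of $\Lambda^{\otimes2}$, which by (\ref{p-evec2}) is one-dimensional and spanned by a strictly positive definite matrix, so $\rho>0$, and dually for $A_0$ (equivalently: a boundary eigenvector $\rho$ of $\Lambda$ would make $\rho\otimes\rho$ a boundary eigenvector of $\Lambda^{\otimes2}$, contradicting (\ref{p-evec1}), so $\Lambda$ is irreducible and Proposition~\ref{qirr}(\ref{evec2}) applies). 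Relatedly, the Jordan-block exclusion is only asserted; it does go through, e.g.\ with $X$ the eigenvector and $Z$ a Hermitian generalised eigenvector the antisymmetric combination $X\otimes Z-Z\otimes X$ is a second eigenvector of $\Lambda^{\otimes2}$ at $r(\Lambda)^2$, or one pairs $\Lambda(Z)=r(\Lambda)Z+X$ with the positive adjoint eigenvector to get $\Tr A_0\rho_0=0$, a contradiction --- but one of these arguments needs to be spelled out.
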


A completely positive map $\Lambda$ on $\mathcal{T}(\mathcal{H})$ 
is called primitive when at least one condition in Proposition \ref{qprim} holds. 
It can be shown that the primitivity implies the irreducibility and 
that the primitivity of $\Lambda$ implies that of $\Lambda^\ast$ \cite{Yoshida}.
When a completely positive map maps all diagonal matrices to themselves, 
the above irreducibility and primitivity are equivalent to 
the irreducibility and primitivity in the classical case, respectively. 
As for other equivalent conditions for the primitivity, see \cite[Theorem~6.7]{Wolf}.

Proposition \ref{qprim} has been derived from Corollary 3 and Theorem~6 of \cite{Yoshida}. 
For a completely positive map, Corollary 3 of \cite{Yoshida} is simple as follows.

\begin{proposition}[{\cite[Corollary 3]{Yoshida}}]
	Let $\Lambda$ be a completely positive map. 
	Then, $\Lambda$ is primitive if and only if $\Lambda^{\otimes2}$ is irreducible.
\end{proposition}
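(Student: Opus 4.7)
The plan is to deduce the equivalence directly from Propositions \ref{qirr} and \ref{qprim}, by observing that conditions (P-ii)--(P-vi) of Proposition \ref{qprim} for $\Lambda$ are \emph{literally} conditions (I-ii)--(I-vi) of Proposition \ref{qirr} applied to the map $\Lambda^{\otimes 2}$ on $\cT(\cH\otimes\cH)$. So the proof reduces to a careful definition chase rather than any new analytic content.

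First, I would record the elementary compatibilities of the tensor-square operation already noted in Section~\ref{qPF}: $r(\Lambda^{\otimes 2})=r(\Lambda)^2$, $(\Lambda^{\otimes 2})^*=(\Lambda^*)^{\otimes 2}$, the identity map on $\cT(\cH\otimes\cH)$ is $\iota^{\otimes 2}$, and $\dim(\cH\otimes\cH)=(\dim\cH)^2$ so that $(\dim(\cH\otimes\cH))^2-1=(\dim\cH)^4-1$. With these in hand, the spectral radius appearing in (I-$k$)[\,$\Lambda^{\otimes 2}$\,] matches the distinguished eigenvalue $r(\Lambda)^2$ in (P-$k$), adjoint conditions transfer correctly, the exponent in (P-iv) matches the exponent in (I-iv) applied to $\Lambda^{\otimes 2}$, and the ``states'' and ``strictly positive definite matrices'' agree once interpreted on the correct ambient space.

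With the matching in place the proof closes in a line: by Proposition \ref{qprim} the primitivity of $\Lambda$ is equivalent to (P-ii); by Proposition \ref{qirr} applied to $\Lambda^{\otimes 2}$ the irreducibility of $\Lambda^{\otimes 2}$ is equivalent to (I-ii)[\,$\Lambda^{\otimes 2}$\,]; and these two statements have just been shown to coincide. One can use any other pair $k\in\{\rom{iii},\rom{iv},\rom{v},\rom{vi}\}$ in place of $k=\rom{ii}$; the condition (P-i) versus (I-i) pairing is slightly less clean because the former is a direct limit of powers of $\Lambda$ while the latter is a Cesàro average of powers of $\Lambda^{\otimes 2}$, so I would deliberately route the argument through (P-ii)/(I-ii) to avoid that distraction.

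I do not expect any serious obstacle in this corollary: all the substantive work was invested in establishing Propositions \ref{qirr} and \ref{qprim} themselves. The only point that requires genuine attentiveness is bookkeeping -- keeping track of which ambient space $\cT(\cH)$ versus $\cT(\cH\otimes\cH)$ each positivity and multiplicity condition is being read on -- which is a minor matter of notation rather than of mathematics.
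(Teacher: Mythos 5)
Your matching of conditions is correct: (P-ii) through (P-vi) of Proposition~\ref{qprim} really are (I-ii) through (I-vi) of Proposition~\ref{qirr} instantiated at $\Lambda^{\otimes 2}$, using $r(\Lambda^{\otimes 2})=r(\Lambda)^2$, $(\Lambda^{\otimes 2})^\ast = (\Lambda^\ast)^{\otimes 2}$, and $(\dim(\cH\otimes\cH))^2-1=(\dim\cH)^4-1$, and you are right to route through one of these rather than (P-i)/(I-i), which do not align literally. As a formal deduction from the two stated propositions, the argument closes.

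The difficulty is one of logical order rather than of correctness. The paper provides no proof of this proposition at all; it is a citation of Corollary~3 of \cite{Yoshida}, and the paper states explicitly, immediately before the proposition, that ``Proposition~\ref{qprim} has been derived from Corollary~3 and Theorem~6 of \cite{Yoshida}.'' In other words, the proposition you are asked to prove is one of the two \emph{inputs} used to assemble Proposition~\ref{qprim}, not a consequence of it. Your route therefore runs the derivation backwards: you take for granted the equivalence of (P-i) with (P-ii) (which is precisely where the content of Corollary~3 lives, namely that absence of periodicity in the sense of (P-i) is equivalent to irreducibility of $\Lambda^{\otimes 2}$) and then peel off the definitional part. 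This is circular at the level of the underlying mathematics, even though it is internally consistent if one treats Propositions~\ref{qirr} and~\ref{qprim} as opaque axioms. A genuine proof of the proposition would have to establish the nontrivial implication (P-i) $\Leftrightarrow$ ``$\Lambda^{\otimes 2}$ satisfies (I-ii)'' directly, without already having the full list of equivalences in Proposition~\ref{qprim} in hand; that argument is in \cite{Yoshida}, not reconstructed here. It would be worth stating this dependency explicitly rather than presenting the definition chase as a proof of the cited result.
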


The spectral radius of a trace-preserving completely positive map equals one, 
and its adjoint map has the eigenvector $I$ \cite[Section~3.2]{Yoshida}, where $I$ denotes the identity matrix on $\mathcal{H}$. 
Thus, Conditions (\ref{evec2}) and (\ref{p-evec2}) lead to the following corollary immediately.

\begin{corollary}
	Let $\Lambda$ be a trace-preserving completely positive map. 
	Then, $\Lambda$ is irreducible if and only if 
	$\Lambda$ has a fixed state with full rank and the equation $\dim\Ker(\Lambda - \iota)=1$ holds. 
	$\Lambda$ is primitive if and only if $\Lambda$ has a fixed state with full rank 
	and the equation $\dim\Ker(\Lambda^{\otimes2} - \iota^{\otimes2})=1$ holds.
\end{corollary}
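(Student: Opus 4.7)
The plan is to derive the corollary as a direct specialization of the equivalence (\ref{mean})$\Leftrightarrow$(\ref{evec2}) in Proposition~\ref{qirr} and (\ref{p-mean})$\Leftrightarrow$(\ref{p-evec2}) in Proposition~\ref{qprim}, invoking the two facts recalled immediately before the statement: for a trace-preserving completely positive map $\Lambda$, one has $r(\Lambda)=1$, and the adjoint satisfies $\Lambda^{\ast}(I)=I$ with $I>0$ (because $I$ is strictly positive definite).

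For the irreducible half, I would read condition (\ref{evec2}) of Proposition~\ref{qirr} under these TPCP constraints. The requirement that $\Lambda^{\ast}$ possess a strictly positive definite eigenvector associated with $r(\Lambda)=1$ is then automatic, witnessed by $I$. The requirement that $\Lambda$ possess a strictly positive definite eigenvector associated with $r(\Lambda)=1$ is, after normalizing the trace, the existence of a full-rank fixed state. Finally, ``the eigenvalue $r(\Lambda)$ of $\Lambda$ has geometric multiplicity~$1$'' is exactly $\dim\Ker(\Lambda-\iota)=1$. Combining these three observations with (\ref{mean})$\Leftrightarrow$(\ref{evec2}) yields the first claim.

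For the primitive half, I would apply the same reduction to condition (\ref{p-evec2}) of Proposition~\ref{qprim}, using that $\Lambda^{\otimes2}$ is itself trace preserving completely positive, so $r(\Lambda^{\otimes2})=r(\Lambda)^2=1$ and $(\Lambda^{\otimes2})^{\ast}(I\otimes I)=I\otimes I$. Then (\ref{p-evec2}) collapses to: $\Lambda^{\otimes2}$ has a full-rank fixed state and $\dim\Ker(\Lambda^{\otimes2}-\iota^{\otimes2})=1$. The only bookkeeping needed is to replace ``$\Lambda^{\otimes2}$ has a full-rank fixed state'' by ``$\Lambda$ has a full-rank fixed state''; the forward direction is immediate because $\rho\otimes\rho>0$ is fixed by $\Lambda^{\otimes2}$ whenever $\rho>0$ is fixed by $\Lambda$, and the converse uses the already noted fact that primitivity of $\Lambda$ implies irreducibility, so by the first half of the corollary $\Lambda$ itself admits a full-rank fixed state.

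There is no real obstacle in this argument: all the substantive work is absorbed into Propositions~\ref{qirr} and~\ref{qprim}, and the trace-preserving structure trivializes the $\Lambda^{\ast}$ (respectively $(\Lambda^{\otimes2})^{\ast}$) half of conditions (\ref{evec2}) and (\ref{p-evec2}). The only point requiring explicit comment is the tensor-product fixed-state step, handled by the elementary observation above.
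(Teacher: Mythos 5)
Your argument is correct and is essentially the paper's own route: the paper derives the corollary ``immediately'' from Conditions (\ref{evec2}) and (\ref{p-evec2}) together with the facts that a trace-preserving completely positive map has $r(\Lambda)=1$ and $\Lambda^{\ast}(I)=I$, exactly as you do. Your explicit handling of the replacement of ``$\Lambda^{\otimes2}$ has a full-rank fixed state'' by ``$\Lambda$ has a full-rank fixed state'' (via $\rho\otimes\rho$ in one direction and primitivity $\Rightarrow$ irreducibility in the other) just spells out a step the paper leaves implicit.
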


Further, we have the following proposition.

\begin{proposition}[{\cite[Corollary 6]{Yoshida}}]\label{MGR}
	Let $\Omega$ be a nonempty finite set, $\Lambda_\omega$ be a completely positive map 
	and $a_\omega$ be a positive number for each $\omega\in\Omega$. 
	If $\Lambda := \sum_\omega \Lambda_\omega$ is irreducible, 
	then so is $\Lambda_a := \sum_\omega a_\omega \Lambda_\omega$. 
	If $\Lambda := \sum_\omega \Lambda_\omega$ is primitive, 
	then so is $\Lambda_a := \sum_\omega a_\omega \Lambda_\omega$.
\end{proposition}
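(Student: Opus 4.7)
The plan is to leverage the equivalent characterization of irreducibility that is most transparent under positive linear combinations, namely Condition (\ref{exp})$'$ of Proposition \ref{qirr}: for every pair of rank-one states $\rho,\sigma$, there exists a natural number $n$ with $\Tr\sigma\Lambda^n(\rho)>0$. This condition is ideal because it survives trivially under reweighting of a sum of completely positive summands.

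First, assume $\Lambda=\sum_\omega\Lambda_\omega$ is irreducible. Fix rank-one states $\rho$ and $\sigma$. By Condition (\ref{exp})$'$ applied to $\Lambda$, there is an $n$ with $\Tr\sigma\Lambda^n(\rho)>0$. Expanding,
\begin{equation*}
\Tr\sigma\Lambda^n(\rho)
=\sum_{\omega_1,\ldots,\omega_n\in\Omega}\Tr\sigma\,\Lambda_{\omega_n}\circ\cdots\circ\Lambda_{\omega_1}(\rho),
\end{equation*}
and every term on the right is nonnegative because each $\Lambda_\omega$ is completely positive and $\rho,\sigma\ge0$. Hence at least one word $(\omega_1^\ast,\ldots,\omega_n^\ast)$ contributes strictly positively. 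Now expand $\Lambda_a^n$ in the same way: the same word contributes $a_{\omega_1^\ast}\cdots a_{\omega_n^\ast}\,\Tr\sigma\,\Lambda_{\omega_n^\ast}\circ\cdots\circ\Lambda_{\omega_1^\ast}(\rho)>0$, while every other term is nonnegative. Therefore $\Tr\sigma\Lambda_a^n(\rho)>0$, so $\Lambda_a$ also satisfies Condition (\ref{exp})$'$ and is irreducible.

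For the primitivity claim, I would reduce it to the irreducible case via the preceding proposition, which asserts that a completely positive map $\Phi$ is primitive if and only if $\Phi^{\otimes2}$ is irreducible. Observe that
\begin{equation*}
\Lambda^{\otimes2}=\sum_{(\omega,\omega')\in\Omega\times\Omega}\Lambda_\omega\otimes\Lambda_{\omega'},\qquad
\Lambda_a^{\otimes2}=\sum_{(\omega,\omega')\in\Omega\times\Omega}(a_\omega a_{\omega'})\,\Lambda_\omega\otimes\Lambda_{\omega'},
\end{equation*}
where the coefficients $a_\omega a_{\omega'}$ are strictly positive and each $\Lambda_\omega\otimes\Lambda_{\omega'}$ is completely positive. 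If $\Lambda$ is primitive, then $\Lambda^{\otimes2}$ is irreducible; applying the already-established irreducible case of the theorem to the index set $\Omega\times\Omega$ with weights $a_\omega a_{\omega'}$ then yields that $\Lambda_a^{\otimes2}$ is irreducible, and hence $\Lambda_a$ is primitive.

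The only minor subtlety, and arguably the main conceptual step, is recognizing that the primitivity statement reduces to the irreducibility statement by passing to the tensor square; everything else is a direct manipulation of nonnegative expansions. No delicate spectral or continuity argument is needed, since the positive-weight assumption means zero terms in $\Lambda^n$ correspond exactly to zero terms in $\Lambda_a^n$.
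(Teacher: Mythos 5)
Your proof is correct. Note that the paper itself gives no proof of this proposition --- it is quoted as Corollary 6 of \cite{Yoshida} --- so there is no internal argument to compare against; what you supply is a self-contained derivation from the equivalences the paper does state, which is legitimate. Your route through Condition (\ref{exp})$'$ works: expanding $\Lambda^n$ into words $\Lambda_{\omega_n}\circ\cdots\circ\Lambda_{\omega_1}$, every word contributes nonnegatively to $\Tr\sigma\Lambda^n(\rho)$, so positivity of the sum forces some positive word, and that same word appears in $\Lambda_a^n$ with the strictly positive weight $a_{\omega_1^\ast}\cdots a_{\omega_n^\ast}$; the reduction of the primitivity claim to the irreducibility claim via the tensor-square criterion, applied to the index set $\Omega\times\Omega$ with weights $a_\omega a_{\omega'}$, is also exactly right. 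For comparison, Condition (\ref{ineq}) gives an even shorter argument for the irreducibility half: if $\Lambda_a(\rho)\le\alpha\rho$ and $a_{\min}:=\min_\omega a_\omega>0$, then $a_{\min}\Lambda(\rho)\le\Lambda_a(\rho)\le\alpha\rho$, hence $\Lambda(\rho)\le(\alpha/a_{\min})\rho$ and irreducibility of $\Lambda$ yields $\rho>0$; this variant avoids choosing $n$ altogether and is closer in spirit to the cone-theoretic setting of the cited reference, where rank-one states and word expansions are less natural, while your expansion argument has the merit of making the combinatorial mechanism completely explicit.
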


All the statements in this section hold under a more general setting \cite{Yoshida}. 
Sections~\ref{S2}--\ref{s14} derive several properties only with the irreducibility. 
Only Section~\ref{cal} requires the primitivity.

To understand the statements in this section, 
we give a trace-preserving completely positive map that is irreducible but not primitive as follows.

\begin{example*}
	Let $\{ \ket{i} \}_{i=0}^{d-1}$ be an orthonormal basis of $\mathbb{C}^d$ 
	and $\Lambda$ be the trace-preserving completely positive map on $\mathcal{T}(\mathbb{C}^d)$ 
	which maps any Hermitian matrix $H$ to 
	\[
	\Lambda(H) := \sum_{i=0}^{d-1} \braket{i-1 |H| i-1}\ketbra{i}{i},
	\]
	where $i\in\mathbb{Z}/d\mathbb{Z}$. 
	To show that $\Lambda$ is irreducible, we adopt Condition (\ref{ineq}). 
	Let $\rho$ and $\alpha$ be a state and nonnegative number, respectively, satisfying $\Lambda(\rho)\le\alpha\rho$. 
	Suppose $\braket{i|\rho|i}=0$ for an $i$. Then, 
	\[
	0 = \alpha \braket{i|\rho|i} \ge \braket{i|\Lambda(\rho)|i} = \braket{i-1|\rho|i-1},
	\]
	whence $\braket{i-1|\rho|i-1}=0$. 
	Iterating this, we have $\Tr\rho=0$, which contradicts that $\rho$ is a state. 
	Therefore, $\braket{i|\rho|i}>0$ for any $i$. 
	Since the inequality $0<\Lambda(\rho)\le\alpha\rho$ means $\rho>0$, $\Lambda$ is irreducible.
	
	To show that $\Lambda$ is not primitive, we adopt Condition (\ref{p-evec2}). 
	The separable state $(1/d)\sum_{i=0}^{d-1} \ketbra{i}{i}\otimes\ketbra{i}{i}$ is mapped to itself by $\Lambda^{\otimes2}$. 
	The maximally mixed state $(1/d)I$ is an eigenvector associated with $r(\Lambda)=1$ of $\Lambda$. 
	Thus, the two separable states $(1/d)\sum_{i=0}^{d-1} \ketbra{i}{i}\otimes\ketbra{i}{i}$ and $(1/d^2)I^{\otimes2}$ 
	are eigenvectors associated with $r(\Lambda)^2=1$ of $\Lambda^{\otimes2}$. 
	Therefore, Condition (\ref{p-evec2}) does not hold. 
	Moreover, we show that $\Lambda$ does not satisfy Condition (\ref{p-mean}). 
	The sequence $\{\Lambda^n(\ketbra{0}{0})\}_n$ does not converge because $\Lambda(\ketbra{i}{i}) = \ketbra{i+1}{i+1}$ for any $i$. 
	Hence, Condition (\ref{p-mean}) does not hold.\qed
\end{example*}

\section{Evaluations for cumulant generating function}\Label{S2}

We discuss the data sequence of hidden Markovian process 
with quantum hidden system generated from 
quantum measurement process described by
a set of CP maps (an instrument) $C=\{C_\omega\}_{\omega \in \Omega}$ when 
$\Lambda := \sum_{\omega \in \Omega} C_\omega $ is irreducible.
For a symbol $\omega \in \Omega$, we assign it to a real number $x_\omega$.
Since Proposition \ref{MGR} guarantees that
$\Lambda_\theta :=\sum_{\omega \in \Omega} e^{\theta x_\omega} C_\omega$ 
is also an irreducible completely positive map, the completely positive map
$\Lambda_\theta$ has the Perron-Frobenius eigenvalue $\lambda_\theta$
and the Perron-Frobenius eigenvector $\rho_\theta$ whose trace equals one.
We define the function $\phi(\theta)$ to be $\log \lambda_\theta$.
Here, the adjoint map $\Lambda_\theta^*$
has the same Perron-Frobenius eigenvalue $\lambda_\theta$.
{We choose the Perron-Frobenius eigenvector $A_\theta$ of $\Lambda^\ast$ 
such that $A_\theta-I$ is positive semi-definite but not strictly positive definite. 
It is possible because once we take a Perron-Frobenius eigenvector $A$ of $\Lambda_\theta$, 
$A$ is strictly positive definite and 
an appropriate positive number $\alpha$ satisfies that $\alpha A - I$ is positive semi-definite but not strictly positive definite.}

\begin{figure}
\begin{center}
\scalebox{1}{\includegraphics[scale=0.5]{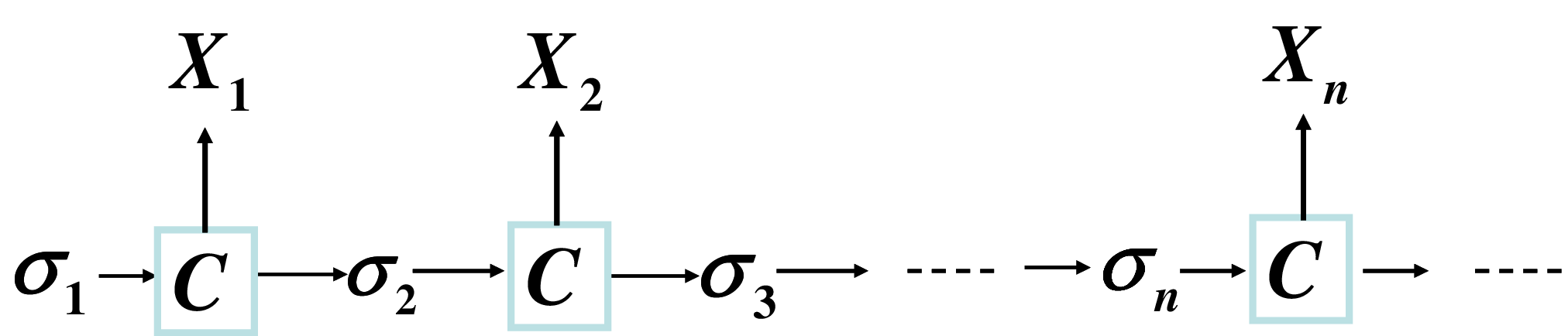}}
\end{center}

\caption{Data sequence from hidden Markovian process with quantum hidden system.
$\sigma_i$ is the state of the hidden quantum system at time $i$.}
\Label{F2}
\end{figure}%

We denote the sequence of observed real numbers by $X_1, \ldots, X_n$ as Fig. \ref{F2}.
When the initial state is $\rho$,
the variables $X_1, \ldots, X_n$ take the values $x_1, \ldots, x_n$, respectively,
with the probability
$\Tr C_{\omega_n} \circ \cdots \circ C_{\omega_1} (\rho) $, where $x_i=x_{\omega_i}$.
In this case, we denote the cumulant generating function
of the variable $n X^n =\sum_{i=1}^n X_i$ by $\phi_{n,\rho}(\theta)$.
When the initial state is the eigenvector $ \rho_\theta$ of $\Lambda_\theta$,
the cumulant generating function $\phi_{n,\rho_\theta}(\theta)$
is calculated as
\begin{align}
& e^{\phi_{n,\rho_\theta}(\theta)}
= \sum_{\omega_1, \ldots, \omega_n}
\Tr e^{\theta x_{\omega_n} }C_{\omega_n} \circ \cdots \circ 
e^{\theta x_{\omega_1}} C_{\omega_1} (\rho_\theta) \nonumber \\
=& \Tr \Lambda_\theta^n (\rho_\theta)
= \lambda_\theta^n \Tr  \rho_\theta = \lambda_\theta^n,\Label{E5-14}
\end{align}
which implies
\begin{align}
\phi_{n,\rho_\theta}(\theta) =n\phi(\theta).
\end{align}

Now, we evaluate the cumulant generating function $\phi_{n,\rho}(\theta)$ for a general initial state $\rho$
by defining
\begin{align}
\overline{\delta}_\rho(\theta):=\log \Tr A_\theta \rho, \quad
\underline{\delta}_\rho(\theta):=\log \Tr A_\theta \rho - \log \|A_\theta\|.
\end{align}
Then, we have the following main theorem.
\begin{theorem}\Label{lemma:mgf-finite-evaluation}
The cumulant generating function $\phi_{n,\rho}(\theta)$ of observed variables $X_1, \ldots, X_n$ is evaluated as
\begin{align}
n \phi(\theta)+\underline{\delta}_\rho(\theta)
 \le \phi_{n,\rho}(\theta) \le n \phi(\theta) + \overline{\delta}_\rho(\theta).
\Label{25-12}
\end{align}
\end{theorem}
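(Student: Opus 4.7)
The plan is to interpret $e^{\phi_{n,\rho}(\theta)}$ as $\Tr\Lambda_\theta^n(\rho)$ (obtained by the same manipulation as in \eref{E5-14} but with $\rho$ in place of $\rho_\theta$, so that one cannot pull the eigenvalue out and a remainder trace factor remains), and then to sandwich it between two quantities that differ from $\lambda_\theta^n\,\Tr A_\theta\rho$ by at most the factor $\|A_\theta\|$. The crucial auxiliary identity is that $A_\theta$ is the Perron-Frobenius eigenvector of $\Lambda_\theta^\ast$ with the same eigenvalue $\lambda_\theta$ as $\Lambda_\theta$, which yields
\[
\Tr A_\theta\,\Lambda_\theta^n(\rho) \;=\; \Tr (\Lambda_\theta^\ast)^n(A_\theta)\,\rho \;=\; \lambda_\theta^n \Tr A_\theta\rho.
\]

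The second ingredient is the normalization of $A_\theta$ described just before the theorem: $A_\theta-I$ is positive semi-definite but \emph{not} strictly positive definite. The first half of this condition gives $I\le A_\theta$; the second half, together with the definition of the operator norm, gives $A_\theta\le\|A_\theta\|\,I$ (the smallest eigenvalue of $A_\theta$ is exactly one, so $\|A_\theta\|$ is the largest). Hence $I\le A_\theta\le\|A_\theta\|\,I$.

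Since $\Lambda_\theta$ is a sum of completely positive maps weighted by positive numbers $e^{\theta x_\omega}$, the operator $\Lambda_\theta^n(\rho)$ is positive semi-definite. Taking the trace against the sandwich above, I would conclude
\[
\Tr\Lambda_\theta^n(\rho) \;\le\; \Tr A_\theta\,\Lambda_\theta^n(\rho) \;\le\; \|A_\theta\|\,\Tr\Lambda_\theta^n(\rho).
\]
Combining with the identity of the first paragraph gives
\[
\frac{\lambda_\theta^n\,\Tr A_\theta\rho}{\|A_\theta\|} \;\le\; \Tr\Lambda_\theta^n(\rho) \;=\; e^{\phi_{n,\rho}(\theta)} \;\le\; \lambda_\theta^n\,\Tr A_\theta\rho.
\]
Taking logarithms, recalling $\phi(\theta)=\log\lambda_\theta$, and substituting the definitions of $\overline{\delta}_\rho(\theta)$ and $\underline{\delta}_\rho(\theta)$ yields \eref{25-12}.

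The only step requiring care is verifying that the chosen normalization of $A_\theta$ really does force both $I\le A_\theta$ and $\|A_\theta\|I\ge A_\theta$; this is the only place where the specific normalization in the paragraph before the theorem is used, and it is exactly what makes the two constants $\overline{\delta}_\rho(\theta)$ and $\underline{\delta}_\rho(\theta)$ differ by $\log\|A_\theta\|$ rather than by an uncontrolled quantity. Aside from this, the argument is a direct Perron-Frobenius pinching with no real obstacle.
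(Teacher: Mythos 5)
Your argument is correct and essentially the same as the paper's proof: both rest on the adjoint identity $\Tr A_\theta\,\Lambda_\theta^n(\rho)=\Tr(\Lambda_\theta^\ast)^n(A_\theta)\,\rho=\lambda_\theta^n\Tr A_\theta\rho$ combined with the operator inequalities $I\le A_\theta$ and $\frac{1}{\|A_\theta\|}A_\theta\le I$ guaranteed by the normalization of $A_\theta$, followed by taking logarithms. The only cosmetic difference is that you package the two inequalities as a single sandwich $I\le A_\theta\le\|A_\theta\| I$, whereas the paper applies them separately for the upper and lower bounds.
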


While this evaluation is very simple,
this evaluation leads so many fruitful derivations for 
asymptotic behavior of hidden Markovian process with quantum hidden system.

\begin{proof}
Since $A_\theta \ge I $, we have
\begin{align}
& e^{\phi_{n,\rho}(\theta)}
= \sum_{\omega_1, \ldots, \omega_n}
\Tr e^{\theta x_{\omega_n} }C_{\omega_n} \circ \cdots \circ 
e^{\theta x_{\omega_1}} C_{\omega_1} (\rho) \nonumber \\
=& \Tr \Lambda_\theta ^n (\rho)
\le \Tr  A_\theta \Lambda_\theta ^n (\rho)
= \Tr  (\Lambda_\theta ^*)^n (A_\theta) \rho \nonumber \\
=& \lambda_\theta^n \Tr  A_\theta \rho
= \lambda_\theta^n e^{\overline{\delta}_\rho(\theta)},
\end{align}
which implies the second inequality in \eqref{25-12}.
Conversely, since $\frac{1}{\|A_\theta\|}A_\theta \le I $, we have
\begin{align}
&e^{\phi_{n,\rho}(\theta)}
=\Tr \Lambda_\theta ^n (\rho)
\ge \Tr \frac{1}{\|A_\theta\|} A_\theta \Lambda_\theta ^n (\rho) \nonumber \\
=& \frac{1}{\|A_\theta\|} \lambda_\theta^n \Tr  A_\theta \rho
= \lambda_\theta^n e^{\underline{\delta}_\rho(\theta)},
\end{align}
which implies the first inequality in \eqref{25-12}.
\end{proof}

\begin{lemma} \Label{L20f}
\begin{align}
\lim_{\theta\to 0} \overline{\delta}_{\rho}(\theta) = 0 ,\quad 
\lim_{\theta\to 0} \underline{\delta}_{\rho}(\theta) = 0 . \Label{28-5}
\end{align}
\end{lemma}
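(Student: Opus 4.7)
The plan is to identify $A_0$ explicitly, argue that $A_\theta \to A_0$ as $\theta\to 0$, and then read off both limits from the continuity of $\Tr$ and the operator norm.

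First, I would compute $A_0$ directly. At $\theta=0$, $\Lambda_\theta = \Lambda = \sum_\omega C_\omega$ is trace-preserving, so its adjoint is unital: $\Lambda_0^\ast(I) = I$. Since $r(\Lambda) = 1$ by trace preservation, the identity matrix $I$ is a strictly positive definite eigenvector of $\Lambda_0^\ast$ associated with the Perron-Frobenius eigenvalue, and it already satisfies the stated normalization ($I - I = 0$ is positive semi-definite but not strictly positive definite). Thus $A_0 = I$, giving $\Tr A_0 \rho = 1$ and $\|A_0\| = 1$.

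Next, I would show $A_\theta \to A_0 = I$ in operator norm as $\theta\to 0$. The key ingredients are: (i) $\Lambda_\theta^\ast$ depends analytically on $\theta$ since the dependence $e^{\theta x_\omega}$ is analytic in $\theta$; (ii) by Proposition~\ref{qirr}, the Perron-Frobenius eigenvalue $\lambda_\theta$ is a simple eigenvalue of $\Lambda_\theta^\ast$ (geometric multiplicity $1$; one must upgrade this to algebraic multiplicity $1$, which follows from the Perron-Frobenius theorem for irreducible completely positive maps). Standard analytic perturbation theory of simple eigenvalues then yields a holomorphic eigenvector $\tilde A_\theta$ of $\Lambda_\theta^\ast$ near $\theta=0$ with $\tilde A_0 = I$. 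Because strict positive definiteness is an open condition, $\tilde A_\theta$ remains strictly positive definite for $\theta$ near $0$, and the smallest eigenvalue $\mu(\theta) := \min\mathrm{spec}(\tilde A_\theta)$ is continuous with $\mu(0) = 1$. Rescaling, $A_\theta = \mu(\theta)^{-1}\tilde A_\theta$ is the unique Perron-Frobenius eigenvector with the chosen normalization, and it is continuous at $\theta=0$ with $A_\theta \to I$.

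Finally, taking limits in the definitions, continuity of the trace gives
\begin{equation*}
\lim_{\theta\to 0}\Tr A_\theta \rho = \Tr I\rho = 1,
\end{equation*}
so $\overline{\delta}_\rho(\theta)\to\log 1 = 0$, while continuity of $\|\cdot\|$ gives $\|A_\theta\|\to\|I\|=1$, so $\underline{\delta}_\rho(\theta)\to 0 - 0 = 0$, establishing both limits of \eqref{28-5}.

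The main obstacle is the perturbation-theoretic step: one needs the Perron-Frobenius eigenvalue to be algebraically (not merely geometrically) simple to invoke holomorphic perturbation theory and obtain a continuous choice of eigenvector. Once this is in hand, the rest is a soft continuity argument, and everything else reduces to the clean identification $A_0 = I$.
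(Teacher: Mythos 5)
Your proof is correct and follows essentially the same route as the paper: identify $A_0=I$ (using that $\Lambda_0^\ast$ is unital and $r(\Lambda_0)=1$), establish continuity of $\theta\mapsto A_\theta$ at $\theta=0$ under the normalization $\min\mathrm{spec}(A_\theta)=1$, and conclude from the continuity of $\Tr A_\theta\rho$ and of $\|A_\theta\|$. The paper merely asserts the continuity of $A_\theta$ ``from the construction,'' whereas you supply the perturbation-theoretic justification (simplicity of the Perron--Frobenius eigenvalue plus rescaling by the smallest eigenvalue), which is an elaboration of, not a departure from, the paper's argument.
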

\begin{proof}
From the construction of $A_\theta$,
$A_\theta$ is continuous for $\theta$.
Hence,
\begin{align}
\lim_{\theta\to 0} 
\Tr  A_\theta \rho
= \Tr A_0 \rho
= \Tr I \rho=1, \Label{28-6}
\end{align}
which implies the first equation of (\ref{28-5}).
Similarly,
\begin{align}
\lim_{\theta\to 0} 
\|A_\theta\|
=
\|I \|=1.\Label{28-7}
\end{align}
Combining (\ref{28-6}) and (\ref{28-7}),
we obtain the second equation of (\ref{28-5}).
\end{proof}

By taking the limit in \eqref{25-12} of Lemma~\ref{lemma:mgf-finite-evaluation},
we have the following.
\begin{corollary}\label{CC2}
\Label{theorem:mgf-large-deviation-single-letter}
For $\theta \in \mathbb{R}$, we have
\begin{equation}
\lim_{n \to \infty} \frac{1}{n} \phi_{n,\rho}(\theta) = \phi(\theta).
\end{equation}
\end{corollary}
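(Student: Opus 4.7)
The plan is to obtain the corollary as an immediate consequence of Theorem~\ref{lemma:mgf-finite-evaluation} via a sandwich (squeeze) argument. Dividing the two-sided bound
\[
n\phi(\theta) + \underline{\delta}_\rho(\theta) \le \phi_{n,\rho}(\theta) \le n\phi(\theta) + \overline{\delta}_\rho(\theta)
\]
through by $n$, one gets
\[
\phi(\theta) + \frac{\underline{\delta}_\rho(\theta)}{n} \le \frac{1}{n}\phi_{n,\rho}(\theta) \le \phi(\theta) + \frac{\overline{\delta}_\rho(\theta)}{n},
\]
so the conclusion follows once we know that $\overline{\delta}_\rho(\theta)$ and $\underline{\delta}_\rho(\theta)$ are finite real numbers, independent of $n$, for the fixed $\theta \in \mathbb{R}$ under consideration.

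Finiteness is exactly the step I would treat as the (small) obstacle. By the construction recalled just before Theorem~\ref{lemma:mgf-finite-evaluation}, $A_\theta$ is normalized so that $A_\theta - I$ is positive semi-definite, hence $A_\theta \ge I > 0$ and in particular $A_\theta$ is strictly positive definite. Consequently $\Tr A_\theta\rho \ge \Tr\rho = 1 > 0$ for any state $\rho$, so $\overline{\delta}_\rho(\theta) = \log \Tr A_\theta \rho$ is a well-defined (non-negative) real number. Similarly, $\|A_\theta\|$ is a finite positive number since $A_\theta$ is a fixed Hermitian matrix on the finite-dimensional space $\mathcal{H}$, so $\underline{\delta}_\rho(\theta) = \log\Tr A_\theta\rho - \log\|A_\theta\|$ is also finite. (Note that we do \emph{not} need Lemma~\ref{L20f} here; that lemma controls the behaviour as $\theta\to 0$, whereas here $\theta$ is held fixed while $n\to\infty$.)

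With finiteness secured, the $1/n$ factors make both the upper and lower bounds tend to $\phi(\theta)$ as $n\to\infty$, and the squeeze theorem yields
\[
\lim_{n\to\infty} \frac{1}{n}\phi_{n,\rho}(\theta) = \phi(\theta),
\]
which is the claim. The whole argument is essentially two lines once Theorem~\ref{lemma:mgf-finite-evaluation} is in place; the non-trivial content of the corollary lives entirely in that theorem and in the quantum Perron--Frobenius machinery that furnishes the eigenvectors $\rho_\theta$ and $A_\theta$.
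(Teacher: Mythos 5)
Your proposal is correct and follows the same route as the paper, which simply takes the limit in the bound \eqref{25-12} of Theorem~\ref{lemma:mgf-finite-evaluation} after dividing by $n$; your extra remark that $\overline{\delta}_\rho(\theta)$ and $\underline{\delta}_\rho(\theta)$ are finite constants (since $A_\theta\ge I$ gives $\Tr A_\theta\rho\ge 1$ and $\|A_\theta\|<\infty$ in finite dimension) just makes explicit what the paper leaves implicit. You are also right that Lemma~\ref{L20f} is not needed for this corollary.
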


In fact, $\lambda_\theta$ is defined as the minimum solution of the eigenequation of $\Lambda_\theta $.
Hence, the implicit theorem guarantees that
$\lambda_\theta$ is $C^1$-continuous.
Hence, we find that $\phi$ is $C^1$-continuous.
Since $\phi_{n,\rho}$ is convex, the limit $\lim_{n \to \infty} (1/n)\phi_{n,\rho}(\theta)$ is also convex, i.e., $\phi$ is convex.
Therefore, $\frac{d \phi}{d\theta}$ is a continuously increasing function.

\if0
?????
\begin{corollary}
\Label{C-2}
For $\theta \in \mathbb{R}$, we have
\begin{equation}
\lim_{n \to \infty} 
\frac{1}{n} \frac{d \phi_n}{d \theta}(0) 
= \phi'(0).
\end{equation}
\end{corollary}

\begin{proof}
\begin{equation}
\lim_{n \to \infty} 
\frac{1}{n} \frac{d \phi_n}{d \theta}(0) 
=
\lim_{\theta \to 0}
\frac{\phi_n(\theta)}{\theta}
= \phi'(0).
\end{equation}
\end{proof}
\fi

\section{Bregman divergence and its variants}\label{SS5}
Since the cumulant generating function $\phi_n(\theta)$ is convex,
this corollary implies that $\phi(\theta)$ is convex.
Since the eigenvalue is given as the solution of 
the eigenequation of the linear map $\Lambda_\theta$, 
and $\Lambda_\theta$ is differentiability with respect to $\theta$,
the maximum eigenvalue $\lambda_\theta$ and the eigenvector $\rho_\theta$ 
are differentiable with respect to $\theta$.
Therefore, $\phi'(\theta):=\frac{d\phi}{d\theta}(\theta)$
is monotonically increasing with respect to $\theta$.
So, we can define the inverse function ${\phi'}^{-1}$.
For the latter discussion, we introduce 
Bregman divergence $D(\theta\|\bar{\theta})$ \cite{Br}
and
R\'{e}nyi type of Bregman divergence $D_{1+s}(\theta\|\bar{\theta})$ 
\cite[(4.16)]{W-H2}
for the convex function $\phi$ as
\begin{align}
D(\theta\|\bar{\theta}):= &
(\theta-{\bar{\theta}})\phi'(\theta)- \phi(\theta)+ \phi(\bar{\theta}) \Label{1-1},\\
D_{1+s}({\theta} \| {\bar{\theta}}):=&
\frac{\phi((1+s)\theta -s \bar{\theta})-(1+s) \phi(\theta) +s \phi(\bar{\theta})}{s}.
 \Label{1-2}
\end{align}

Since the relation
\begin{align*}
&\frac{d D_{1+s}({\theta} \| {\bar{\theta}})}{ds} \\
=& \frac{
\phi(\theta)
-\phi( \theta + s (\theta-\bar{\theta}))
-s(\bar{\theta}-\theta) \phi'( \theta + s (\theta-\bar{\theta})) }{s^2} \\
=&
 \frac{(1/2)\phi''( \theta + s (\theta-\bar{\theta}))
(\xi s (\bar{\theta}-\theta))^2 }{s^2} >0
 \end{align*}
holds with some parameter $\xi \in (0,1)$,
R\'{e}nyi type of Bregman divergence $D_{1+s}({\theta} \| {\bar{\theta}}) $
is monotonically increasing with respect to $s$.
Also, R\'{e}nyi type of Bregman divergence recovers Bregman divergence with the limit $s \to 0$ as
\begin{align}
\lim_{s \to 0}D_{1+s}({\theta} \| {\bar{\theta}})=
D(\theta\|\bar{\theta}).
\end{align}

Then, 
the properties of a differentiable convex function lead the following lemma.
\begin{lemma}\Label{L23}
When $a > \phi'(0)$,
\begin{align}
\begin{split}
& \inf_{s > 0 \atop \theta > {\phi'}^{-1}(a)} \frac{\phi((1+s)\theta) - (1+s)\phi(\theta)}{s} \\
=& {\phi'}^{-1}(a) a - \phi({\phi'}^{-1}(a)) 
= \sup_{\theta \ge 0}[ \theta a - \phi(\theta) ] \\
=& D({\phi'}^{-1}(a) \| 0 ) .
\end{split}
\Label{26-1}
\end{align}
Similarly, when $a < \phi'(0)$,
\begin{align*}
&\inf_{s > 0 \atop \theta < {\phi'}^{-1}(a)} \frac{\phi((1+s)\theta) - (1+s)\phi(\theta)}{s} \\
=& {\phi'}^{-1}(a) a - \phi({\phi'}^{-1}(a)) 
= \sup_{\theta \le 0}[ \theta a - \phi(\theta) ] \\
=&D({\phi'}^{-1}(a) \| 0 ) .
\end{align*}
\end{lemma}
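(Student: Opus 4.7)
The plan is to prove the three equalities in order, treating only the case $a > \phi'(0)$ in detail since the case $a < \phi'(0)$ is completely symmetric. The key preliminary observation is that, since $\Lambda_0 = \Lambda = \sum_{\omega} C_\omega$ is trace-preserving, its Perron-Frobenius eigenvalue is $\lambda_0 = 1$, so $\phi(0) = \log\lambda_0 = 0$. Plugging $\bar{\theta}=0$ into the definition \eqref{1-2}, this gives the identification
\begin{equation*}
\frac{\phi((1+s)\theta) - (1+s)\phi(\theta)}{s} = D_{1+s}(\theta\|0),
\end{equation*}
so the quantity being infimized is nothing but R\'enyi-type Bregman divergence to $0$.

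For the leftmost equality I would exchange the order of infima as $\inf_{\theta > {\phi'}^{-1}(a)} \inf_{s>0} D_{1+s}(\theta\|0)$ and use the monotonicity in $s$ that is established immediately before the lemma: $D_{1+s}(\theta\|0) \searrow D(\theta\|0)$ as $s \searrow 0$, so the inner infimum equals $D(\theta\|0) = \theta\phi'(\theta) - \phi(\theta)$. Differentiating this in $\theta$ gives $\theta\phi''(\theta)$, which is nonnegative on $\theta > 0$ by convexity of $\phi$. Because $a > \phi'(0)$ forces ${\phi'}^{-1}(a) > 0$, the entire domain of the remaining infimum lies in $(0,\infty)$, so $D(\theta\|0)$ is nondecreasing there, and by continuity the infimum over $\theta > {\phi'}^{-1}(a)$ equals $D({\phi'}^{-1}(a)\|0) = {\phi'}^{-1}(a)\cdot a - \phi({\phi'}^{-1}(a))$.

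The second equality is a Legendre-transform computation: the map $\theta \mapsto \theta a - \phi(\theta)$ has derivative $a - \phi'(\theta)$, and by convexity its unique maximizer is $\theta^\ast = {\phi'}^{-1}(a)$; since $a > \phi'(0)$ places $\theta^\ast$ strictly inside $[0,\infty)$, the constrained and unconstrained suprema agree. The third equality then follows by unfolding \eqref{1-1} at $\bar{\theta} = 0$ and using $\phi(0) = 0$ together with $\phi'(\theta^\ast) = a$.

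The case $a < \phi'(0)$ runs symmetrically: ${\phi'}^{-1}(a) < 0$, so $\theta\phi''(\theta) \le 0$ on the relevant domain $\theta < 0$, making $D(\theta\|0)$ nondecreasing as $\theta$ decreases from $0$; thus the infimum over $\theta < {\phi'}^{-1}(a)$ is again attained in the limit at ${\phi'}^{-1}(a)$. I do not expect a serious obstacle here: the only subtle point is bookkeeping — ensuring that the sign of $a - \phi'(0)$ is used consistently to guarantee both that ${\phi'}^{-1}(a)$ lies in the one-sided range over which the supremum is taken, and that the monotonicity of $D(\theta\|0)$ in $\theta$ points in the direction that makes the boundary value ${\phi'}^{-1}(a)$ the infimizer.
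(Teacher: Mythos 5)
Your proof is correct, and since the paper omits the proof of this lemma (deferring to "properties of a differentiable convex function"), your argument fills in exactly the intended reasoning. The key ingredients are all in place: $\phi(0)=0$ from trace preservation, the identification of the objective with $D_{1+s}(\theta\|0)$, the monotonicity in $s$ established just before the lemma giving $\inf_{s>0}D_{1+s}(\theta\|0)=D(\theta\|0)$, and the observation that $\frac{d}{d\theta}D(\theta\|0)=\theta\phi''(\theta)$ has the right sign on the constraint set because $a>\phi'(0)$ forces ${\phi'}^{-1}(a)>0$ (and the mirror argument for $a<\phi'(0)$). The Legendre-transform step for the middle equality is also handled correctly, with the constraint $\theta\ge0$ inactive at the unconstrained maximizer. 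No gaps.
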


\section{Central limit theorem}\Label{s13}
Next, we discuss the central limit theorem for the sample mean $X^n$.
Since 
$\Tr \Lambda_0^n ( \frac{d\rho_\theta}{d\theta}|_{\theta=0})
=\Tr \frac{d\rho_\theta}{d\theta}|_{\theta=0}
=0$, 
taking the derivative with respect to $\theta$ at $\theta=0$ in \eqref{E5-14}, 
we have
\begin{align}
&n \sum_{\omega} x_{\omega} \Tr C_\omega (\rho_0)
=
\Tr \frac{d \Lambda_\theta^n}{d \theta}|_{\theta=0} (\rho_0)
+\Tr \Lambda_0^n ( \frac{d\rho_\theta}{d\theta}|_{\theta=0}) \nonumber \\
=& \frac{d \Tr \Lambda_\theta^n (\rho_\theta)}{d \theta}|_{\theta=0}
= \frac{d \lambda_\theta^n}{d \theta}|_{\theta=0}
= n\phi'(0).
\end{align}
That is, the derivative $\phi'(0)$ expresses the expectation 
of $X$ when the initial state is the stationary state $\rho_0$.
That is, to calculate the derivative $\phi'(0)$, it is enough to find the eigenvector $\rho_0$
of $\Lambda_0$ and calculate the expectation under the state $\rho_0$.
Even when the initial state is not the stationary state $\rho_0$,
we can see that the sample mean $X^n$ converges to the derivative $\phi'(0)$
in probability as follows.

Using Theorem~\ref{lemma:mgf-finite-evaluation},
we can characterize the cumulant generating function of 
the random variable $\sqrt{n}
( X^n - \phi'(0) )$ as follows.

\begin{theorem}
\Label{L6}
The cumulant generating function of 
the random variable 
$\sqrt{n}
( X^n - \phi'(0) )$ 
converges as follows.
\begin{align}
& \log \mathsf{E} \Big[
\exp \Big[\delta \sqrt{n} \Big(
X^n - \phi'(0)
\Big) \Big] \Big] \nonumber \\
=&
 \phi_n\Bigl( \frac{\delta}{\sqrt{n}} \Bigr)
-\delta\sqrt{n} 
\phi'(0)
\to \frac{\delta^2}{2}\phi'' (0). 
\Label{28-9}
\end{align}
\end{theorem}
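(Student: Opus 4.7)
The plan is to use Theorem~\ref{lemma:mgf-finite-evaluation} to reduce the convergence statement to a Taylor expansion of $\phi$ at the origin. Setting $\theta = \delta/\sqrt{n}$ in the sandwich bound \eqref{25-12}, I obtain
\begin{align*}
n\phi\!\left(\tfrac{\delta}{\sqrt{n}}\right) - \delta\sqrt{n}\,\phi'(0) + \underline{\delta}_\rho\!\left(\tfrac{\delta}{\sqrt{n}}\right)
\le \phi_{n,\rho}\!\left(\tfrac{\delta}{\sqrt{n}}\right) - \delta\sqrt{n}\,\phi'(0)
\le n\phi\!\left(\tfrac{\delta}{\sqrt{n}}\right) - \delta\sqrt{n}\,\phi'(0) + \overline{\delta}_\rho\!\left(\tfrac{\delta}{\sqrt{n}}\right).
\end{align*}
By Lemma~\ref{L20f}, the boundary correction terms $\overline{\delta}_\rho(\delta/\sqrt{n})$ and $\underline{\delta}_\rho(\delta/\sqrt{n})$ both tend to $0$ as $n\to\infty$. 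So the whole problem reduces to computing the limit of $n\phi(\delta/\sqrt{n}) - \delta\sqrt{n}\,\phi'(0)$.

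The next step is to Taylor expand $\phi$ around $\theta = 0$. First I would note $\phi(0) = \log\lambda_0 = 0$, since $\Lambda_0 = \sum_\omega C_\omega$ is trace-preserving and hence has Perron-Frobenius eigenvalue $1$. Then, assuming $\phi$ is $C^2$ near $0$, the expansion
\[
\phi\!\left(\tfrac{\delta}{\sqrt{n}}\right) = \frac{\delta}{\sqrt{n}}\phi'(0) + \frac{\delta^2}{2n}\phi''(0) + o\!\left(\tfrac{1}{n}\right)
\]
gives $n\phi(\delta/\sqrt{n}) - \delta\sqrt{n}\,\phi'(0) \to \tfrac{\delta^2}{2}\phi''(0)$, which, combined with the sandwich above, yields \eqref{28-9}.

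The main obstacle is justifying the $C^2$ regularity of $\phi$ at $\theta = 0$. The excerpt already argues that $\lambda_\theta$ is $C^1$ via the implicit function theorem applied to the characteristic equation of $\Lambda_\theta$, exploiting the fact that $r(\Lambda_\theta)$ is a simple (geometric multiplicity one) eigenvalue under the irreducibility assumption. The same argument, applied one derivative further, gives $C^2$ smoothness of $\lambda_\theta$ and hence of $\phi(\theta) = \log\lambda_\theta$ in a neighborhood of $0$. Once this smoothness is in hand, the Taylor expansion is valid and the proof closes.

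I expect the write-up to be short: essentially plug $\theta = \delta/\sqrt{n}$ into the bound, invoke Lemma~\ref{L20f} for the correction terms, use $\phi(0)=0$ and $C^2$-smoothness of $\phi$ near $0$ to expand, and take the limit. The only delicate step is confirming the regularity of $\phi$ beyond $C^1$; if the authors prefer, this can be handled either by repeating the implicit function argument for the second derivative or by appealing to analytic perturbation theory of simple eigenvalues.
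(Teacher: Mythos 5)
Your proposal is correct and takes essentially the same route as the paper: substitute $\theta=\delta/\sqrt{n}$ into the sandwich bound \eqref{25-12} of Theorem~\ref{lemma:mgf-finite-evaluation}, remove the correction terms $\overline{\delta}_\rho$, $\underline{\delta}_\rho$ by Lemma~\ref{L20f}, and identify the remaining limit $n\phi(\delta/\sqrt{n})-\delta\sqrt{n}\,\phi'(0)\to\frac{\delta^2}{2}\phi''(0)$ from the second-order behavior of $\phi$ at $0$ together with $\phi(0)=0$. Your explicit remark that the second-order differentiability of $\phi$ at $0$ needs justification (by pushing the implicit-function/perturbation argument one derivative further) addresses a point the paper leaves implicit, since the text only argues $C^1$ regularity of $\phi$.
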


\begin{proof}
Using (\ref{25-12}) and (\ref{28-5}), we have
\begin{align*}
& \lim_{n \to \infty}
 \phi_n\Bigl( \frac{\delta}{\sqrt{n}} \Bigr)
-\delta\sqrt{n} \phi'(0)
 \nonumber \\
\le &
\lim_{n \to \infty}
n \phi\Bigl( \frac{\delta}{\sqrt{n}} \Bigr)
-\delta\sqrt{n} \phi'(0)
+ \overline{\delta}_{\rho}\Bigl( \frac{\delta}{\sqrt{n}} \Bigr)
 \nonumber \\
=& 
\lim_{n \to \infty}
\delta^2 \frac{\phi\bigl( \frac{\delta}{\sqrt{n}} \bigr) - \bigl( \frac{\delta}{\sqrt{n}} \bigr) 
\phi' (0)
}{\bigl( \frac{\delta}{\sqrt{n}} \bigr)^2} 
= \frac{\delta^2}{2}\phi''(0).
\end{align*}
Similarly, the opposite inequality can be shown by (\ref{25-12}) and (\ref{28-5}).
Hence, we obtain the desired relation.
\end{proof}

The right hand side of (\ref{28-9}) is the cumulant generating function of Gaussian distribution with the variance $\phi''(0)$ and average $0$.
Since the limit of the cumulant generating functions uniquely decides the limit of distribution functions \cite{CRRao},
Lemma~\ref{L6} reproduces the central limit theorem 
as a corollary.

\begin{corollary}\Label{Co1} 
The sample mean $X^n$ converges to $\phi'(0)$ in probability.
The limiting distribution of 
$\sqrt{n}(X^n - \phi'(0))$ 
is characterized as
\begin{align}
\lim_{n \to \infty} 
\Pr\{ \sqrt{n}(X^n - \phi'(0))\le\delta \} = 
\Phi\Bigl( \frac{\delta}{\sqrt{\phi''(0)}} \Bigr),
\end{align}
where
$\Phi(y):= \int_{-\infty}^{y} 
\frac{e^{-x^2/2}}{\sqrt{2\pi}} d x$.
\end{corollary}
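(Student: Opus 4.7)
The plan is to deduce the corollary directly from Theorem~\ref{L6}, which already does the analytic heavy lifting by showing that the log-moment generating function of $Y_n := \sqrt{n}(X^n - \phi'(0))$ converges pointwise in $\delta \in \mathbb{R}$ to $\delta^2 \phi''(0)/2$. The latter is the log-moment generating function of the Gaussian $N(0,\phi''(0))$, and it is finite (in fact entire) on all of $\mathbb{R}$. Therefore, by the classical moment-generating-function convergence theorem cited as \cite{CRRao} (Curtiss's theorem), the distributions of $Y_n$ converge weakly to $N(0,\phi''(0))$. Rewriting weak convergence in terms of the standard normal CDF $\Phi$ immediately yields the displayed formula
\[
\lim_{n\to\infty}\Pr\{\sqrt{n}(X^n - \phi'(0))\le\delta\} = \Phi\Bigl(\frac{\delta}{\sqrt{\phi''(0)}}\Bigr).
\]

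For the first assertion, that $X^n\to\phi'(0)$ in probability, I would give a direct Chernoff-style argument from Corollary~\ref{CC2}, since it avoids importing further convergence-in-distribution machinery. For any $\epsilon>0$ and $\theta>0$, Markov's inequality applied to the exponential gives
\[
\Pr\{X^n - \phi'(0) \ge \epsilon\} \le \exp\bigl(\phi_{n,\rho}(\theta) - n\theta(\phi'(0)+\epsilon)\bigr),
\]
and dividing by $n$ in the exponent and taking $n\to\infty$, Corollary~\ref{CC2} replaces $(1/n)\phi_{n,\rho}(\theta)$ by $\phi(\theta)$. Since $\phi$ is $C^1$ with $\phi'(0)$ finite, the Taylor expansion $\phi(\theta) - \theta\phi'(0) = o(\theta)$ near $0$ ensures that $\phi(\theta) - \theta(\phi'(0)+\epsilon) < 0$ for sufficiently small $\theta>0$. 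Hence the upper tail probability decays exponentially; the same argument with $\theta<0$ handles the lower tail, giving convergence in probability.

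Alternatively, convergence in probability is automatic from weak convergence of $Y_n$ to a finite-variance Gaussian because $X^n - \phi'(0) = Y_n/\sqrt{n}$, and tightness of $Y_n$ forces $Y_n/\sqrt{n}\to 0$ in probability. I would include both routes briefly since the Chernoff one is self-contained and reinforces the unified role of the cumulant generating function evaluations from Theorem~\ref{lemma:mgf-finite-evaluation}.

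The main obstacle is essentially just checking that the hypotheses of the MGF convergence theorem apply; this is trivial here because the limiting log-MGF $\delta^2\phi''(0)/2$ is finite on all of $\mathbb{R}$, so no subtle tightness or uniform-integrability worry arises. All other steps are bookkeeping.
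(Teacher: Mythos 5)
Your proposal follows essentially the same route as the paper: the paper also deduces the corollary directly from Theorem~\ref{L6} by noting that the limit in \eqref{28-9} is the cumulant generating function of the Gaussian $N(0,\phi''(0))$ and invoking the fact (cited as \cite{CRRao}) that convergence of cumulant generating functions determines the limiting distribution. Your additional Chernoff-type argument for the convergence in probability (and the remark that it also follows from tightness of $\sqrt{n}(X^n-\phi'(0))$) is correct and simply makes explicit a step the paper leaves implicit.
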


\section{Tail probability}\Label{s14}
Next, we proceed to the evaluations for the tail probability.
Using the notations ${\phi'}^{-1}(a)$, 
$\underline{\delta}_{\rho}(\theta)$,
$\overline{\delta}_{\rho}(\theta)$,
and
$\phi(\theta)$ defined in Section~\ref{S2}
and (\ref{25-12}) of Theorem~\ref{lemma:mgf-finite-evaluation}, 
we can derive the following lower bound on the exponent.

\begin{theorem} \Label{P4}
For any $a > \phi'(0)$, we have
\begin{align}
\begin{split}
& - \log \Pr \{ X^n \ge a \} \ge 
\sup_{\theta \ge 0}[ n \theta a - n \phi(\theta) - \overline{\delta}_{\rho}(\theta)] \\
\ge& n{\phi'}^{-1}(a) a - n\phi({\phi'}^{-1}(a)) - \overline{\delta}_{\rho}({\phi'}^{-1}(a)). 
\end{split}
\Label{27-32}
\end{align}
Similarly, for $a <\phi'(0)$, we have
\begin{align*}
&- \log \Pr \{ X^n \le a \}
\ge
\sup_{\theta \le 0}[ n \theta a - n \phi(\theta)- \overline{\delta}_{\rho}(\theta)]  \\
\ge& n {\phi'}^{-1}(a) a - n \phi({\phi'}^{-1}(a)) - \overline{\delta}_{\rho}({\phi'}^{-1}(a)).
\end{align*}
\end{theorem}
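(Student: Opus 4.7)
The plan is to combine the Chernoff (exponential Markov) bound with the nonasymptotic cumulant bound already proved in Theorem~\ref{lemma:mgf-finite-evaluation}. The structure of the statement makes this natural: the first inequality is a variational bound over all admissible tilts $\theta$, while the second is obtained by specializing to the optimal Chernoff parameter $\theta={\phi'}^{-1}(a)$.

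I would proceed as follows for the upper tail. Fix $\theta\ge 0$. Since $e^{n\theta X^n}$ is nonnegative and $X^n\ge a$ implies $e^{n\theta X^n}\ge e^{n\theta a}$, Markov's inequality gives
\begin{align*}
\Pr\{X^n\ge a\}\le e^{-n\theta a}\,\mathsf{E}\!\left[e^{n\theta X^n}\right]
= e^{-n\theta a + \phi_{n,\rho}(\theta)}.
\end{align*}
Applying the upper bound $\phi_{n,\rho}(\theta)\le n\phi(\theta)+\overline{\delta}_\rho(\theta)$ from \eqref{25-12} and then taking negative logarithms yields
\begin{align*}
-\log\Pr\{X^n\ge a\}\ge n\theta a - n\phi(\theta)-\overline{\delta}_\rho(\theta).
\end{align*}
Since this holds for every $\theta\ge 0$, taking the supremum produces the first inequality of \eqref{27-32}. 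The second inequality is then immediate: the supremum over $\theta\ge 0$ is bounded below by its value at any particular admissible point, and the choice $\theta={\phi'}^{-1}(a)$ is admissible because $\phi'$ is continuously increasing (as noted after Corollary~\ref{CC2}) and $a>\phi'(0)$, so ${\phi'}^{-1}(a)>0$. Plugging in this value gives exactly the right-hand side of the displayed chain.

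The lower-tail case is completely symmetric: for $\theta\le 0$, the inequality $X^n\le a$ implies $e^{n\theta X^n}\ge e^{n\theta a}$, so the same Markov argument applies with $\theta$ restricted to the nonpositive half-line, and ${\phi'}^{-1}(a)<0$ when $a<\phi'(0)$. There is essentially no obstacle here, since all the analytic heavy lifting, namely the sandwich on the finite-$n$ cumulant generating function, is already packaged in Theorem~\ref{lemma:mgf-finite-evaluation}; the only points requiring minor attention are matching the sign of $\theta$ to the direction of the tail so that Markov's inequality is applied to a genuine upper bound, and invoking monotonicity of $\phi'$ to ensure that ${\phi'}^{-1}(a)$ lies in the correct half-line for the specialization step. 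No use of the Bregman identities from Lemma~\ref{L23} is needed for this statement itself, although they would be relevant when further identifying the right-hand side with the Legendre transform for the asymptotic (large-deviation) analysis.
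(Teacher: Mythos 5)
Your proof is correct and follows essentially the same route as the paper: the paper simply cites Proposition A.1 and Lemma 4.1 of \cite{W-H} (the exponential Markov/Chernoff bound and its specialization at $\theta={\phi'}^{-1}(a)$) together with the cumulant sandwich \eqref{25-12}, which is exactly the argument you spell out. Your explicit treatment of the sign of $\theta$ for each tail and of the admissibility of ${\phi'}^{-1}(a)$ matches what the cited Theorem~8.1 of \cite{W-H} does, so there is nothing to add.
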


\begin{proof}
These evaluations follow from Proposition A.1 and Lemma~4.1 of \cite{W-H}
and (\ref{25-12}) of Theorem~\ref{lemma:mgf-finite-evaluation}.
The proof is the same as Theorem~8.1 of \cite{W-H}.
\end{proof}

Further, using Theorem~\ref{lemma:mgf-finite-evaluation},
we can derive bounds of the opposite direction as follows.
\begin{theorem} \Label{T4}
For any $a > \phi'(0)$, we have
\begin{align}
\begin{split}
\lefteqn{ - \log \Pr
\{ X^n \ge a \} 
} 
 \\
\le & 
 \inf_{s > 0 \atop \theta \in \mathbb{R}, {\bar{\theta}} \le 0}
 \frac{1}{s}\Bigl[ n \phi((1+s)\theta) - n (1+s)\phi(\theta) 
 + \overline{\delta}_{\rho}((1+s)\theta) - \underline{\delta}_{\rho}(\theta)
 \\ 
&- (1+s) \log \Bigl(1- e^{ - n [{\bar{\theta}} a - \phi(\theta+{\bar{\theta}}) + \phi(\theta) 
+\overline{\delta}_{\rho}(\theta+{\bar{\theta}}) -\underline{\delta}_{\rho}(\theta)] } 
\Bigr) 
\Bigr]
\\
\le & 
 \inf_{s > 0 \atop \theta > {\phi'}^{-1}(a)} \frac{1}{s}\Bigl[
n \phi((1+s)\theta) - n(1+s)\phi(\theta) 
+ \overline{\delta}_{\rho}((1+s)\theta) - \underline{\delta}_{\rho}(\theta)
\\
& - (1+s) \log \Bigl(1 
- e^{n
[(\theta - {\phi'}^{-1}(a)) a + \phi({\phi'}^{-1}(a)) - \phi(\theta)
+\overline{\delta}_{\rho}({\phi'}^{-1}(a)) -\underline{\delta}_{\rho}(\theta)
]} \Bigr)  
\Bigr]  \\
=& \inf_{s > 0 \atop \theta > {\phi'}^{-1}(a)} 
n D_{1+s}(\theta\|0)
+\frac{1}{s}[\overline{\delta}_{\rho}((1+s)\theta) - \underline{\delta}_{\rho}(\theta)] \\
 &\quad -\frac{1+s}{s}
 \log \Bigl(1- e^{-n D({\phi'}^{-1}(a)\|{\theta})
+\overline{\delta}_{\rho}({\phi'}^{-1}(a)) -\underline{\delta}_{\rho}(\theta)
} \Bigr).
\end{split}
   \Label{27-33}
\end{align}
Similarly, for any $a < \phi'(0)$, we  have
\begin{align}
\lefteqn{ - \log \Pr 
\{ X^n \le a \}
} \nonumber \\
\le& \inf_{s > 0 \atop \theta \in \mathbb{R}, {\bar{\theta}} \ge 0}
\frac{1}{s} \Bigl[ n\phi((1+s)\theta) - n(1+s)\phi(\theta) 
+ \overline{\delta}_{\rho}((1+s)\theta) - \underline{\delta}_{\rho}(\theta)
\nonumber \\
&
 - (1+s) \log \Bigl(1
- e^{ - n [{\bar{\theta}} a - \phi(\theta+{\bar{\theta}}) + \phi(\theta) 
+\overline{\delta}_{\rho}(\theta+{\bar{\theta}}) -\underline{\delta}_{\rho}(\theta)] } \Bigr) 
\Bigr] 
\nonumber\\
\le& \inf_{s > 0 \atop \theta < {\phi'}^{-1}(a)} 
\frac{1}{s} \Bigl[
 n \phi((1+s)\theta) - (n-1)(1+s)\phi(\theta) 
+ \overline{\delta}_{\rho}((1+s)\theta) - \underline{\delta}_{\rho}(\theta)
\nonumber \\
& - (1+s) \log \Bigl(1
- e^{n [(\theta - {\phi'}^{-1}(a)) a + \phi({\phi'}^{-1}(a)) - \phi(\theta)
+\overline{\delta}_{\rho}({\phi'}^{-1}(a)) -\underline{\delta}_{\rho}(\theta)
]} \Bigr) 
\Bigr] 
\nonumber\\
 =& \inf_{s > 0 \atop \theta < {\phi'}^{-1}(a)} 
n D_{1+s}(\theta\|0)
+\frac{1}{s}[\overline{\delta}_{\rho}((1+s)\theta) - \underline{\delta}_{\rho}(\theta)] \nonumber \\
 &\quad -\frac{1+s}{s}
 \log \Bigl(1- e^{-n D({{\phi'}^{-1}(a)}\|{\theta})
+\overline{\delta}_{\rho}({\phi'}^{-1}(a)) -\underline{\delta}_{\rho}(\theta)
} \Bigr) . \nonumber
\end{align}
\end{theorem}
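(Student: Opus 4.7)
The plan is to reduce Theorem~\ref{T4} to the abstract cumulant-generating-function lower bound for tail probabilities stated as Proposition~A.1 and Lemma~4.1 of \cite{W-H}, and then to plug the single-letter estimate (\ref{25-12}) into that abstract bound so as to convert it into the present explicit form. Theorem~\ref{T4} is the ``opposite-direction'' companion to Theorem~\ref{P4}, and the passage from the classical Markov setting of \cite{W-H} to our quantum-hidden setting is encoded entirely in the replacement of $\phi_n$ by its two-sided $\delta_\rho$-correction supplied by (\ref{25-12}).

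First, I would recall the abstract bound. A two-parameter change-of-measure argument (tilt the distribution of $(X_1,\dots,X_n)$ by $e^{\theta n X^n}$ so that the tilted mean lies above $a$, then control the residual mass the tilted measure places below $a$ by a one-sided Chernoff estimate at parameter $\bar\theta\le 0$), combined with the H\"older-type $(1+s)$-moment trick, produces for every $\theta\in\mathbb{R}$, $\bar\theta\le 0$, and $s>0$ an inequality of the shape
\begin{equation*}
-\log\Pr\{X^n\ge a\}\le\tfrac{1}{s}\bigl[\phi_n((1+s)\theta)-(1+s)\phi_n(\theta)\bigr]-\tfrac{1+s}{s}\log\bigl(1-e^{\phi_n(\theta+\bar\theta)-\phi_n(\theta)-n\bar\theta a}\bigr).
\end{equation*}
This is exactly the content of Lemma~4.1 of \cite{W-H}; its proof uses only convexity, the Markov inequality, and the change-of-measure formula, so it transfers verbatim from the classical setting to ours once $\phi_n$ is read as $\phi_{n,\rho}$.

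Second, I substitute (\ref{25-12}) into each occurrence of $\phi_{n,\rho}$. A term entering with a positive coefficient is replaced by its upper bound $n\phi(\eta)+\overline{\delta}_\rho(\eta)$; a term entering with a negative coefficient is replaced by its lower bound $n\phi(\eta)+\underline{\delta}_\rho(\eta)$. Since $\log(1-e^{-x})$ is monotonically increasing in $x$, the same convention applies to the two cumulant terms sitting inside the logarithm: upper-bound $\phi_{n,\rho}(\theta+\bar\theta)$, lower-bound $\phi_{n,\rho}(\theta)$, which enlarges the whole right-hand side in the correct direction. This produces the first displayed inequality of (\ref{27-33}). I then specialise $\bar\theta={\phi'}^{-1}(a)-\theta$ with $\theta>{\phi'}^{-1}(a)$; the hypothesis $a>\phi'(0)$ combined with the strict monotonicity of $\phi'$ established after Corollary~\ref{CC2} guarantees ${\phi'}^{-1}(a)>0$, so this range of $\theta$ is nonempty and $\bar\theta\le 0$ as required. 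Definition (\ref{1-1}) then yields $\bar\theta a-\phi(\theta+\bar\theta)+\phi(\theta)=D({\phi'}^{-1}(a)\|\theta)$, producing the second displayed inequality. Finally, since $\Lambda_0=\Lambda$ is trace-preserving one has $\lambda_0=1$ and hence $\phi(0)=0$, so (\ref{1-2}) gives $\phi((1+s)\theta)-(1+s)\phi(\theta)=sD_{1+s}(\theta\|0)$, which delivers the final equality of (\ref{27-33}). The lower-tail statement follows either by the mirror argument (tilt with $\theta\le 0$ and bound with $\bar\theta\ge 0$) or by applying the upper-tail result to $-X^n$ with observable $-x_\omega$.

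The main obstacle is the sign bookkeeping in the substitution step: at each of the four occurrences of $\phi_{n,\rho}$ — two outside and two inside the $\log(1-e^{-x})$ — one has to verify that the direction chosen from (\ref{25-12}) is compatible with the sign and monotonicity of the surrounding expression, so that substitution loosens rather than tightens the abstract bound. A secondary technical issue is admissibility: when $(1+s)\theta$ and $\theta+\bar\theta$ appear as arguments of $\phi_{n,\rho}$, Proposition~\ref{MGR} must be invoked to ensure that $\Lambda_{(1+s)\theta}$ and $\Lambda_{\theta+\bar\theta}$ remain irreducible completely positive maps and hence possess a Perron-Frobenius eigenvalue; this is automatic because the coefficients $e^{(1+s)\theta x_\omega}$ and $e^{(\theta+\bar\theta)x_\omega}$ are positive for every real exponent.
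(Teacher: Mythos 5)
Your approach matches the paper's: both apply an abstract two-parameter cumulant-generating-function bound from \cite{W-H} and then substitute \eqref{25-12}, specialising $\bar\theta = {\phi'}^{-1}(a)-\theta$ and rewriting via the Bregman divergences of Section~\ref{SS5}. Note only that the reference underlying this opposite-direction bound is Theorem~A.2 of \cite{W-H} (and the paper imports its argument ``in the same way as Theorem~8.2'' there), not Proposition~A.1 and Lemma~4.1, which are what the paper invokes for Theorem~\ref{P4}; the abstract inequality you actually wrote down is nevertheless the correct one for Theorem~\ref{T4}, so this is a mislabel rather than a gap.
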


\begin{proof}
The proof can be shown from Theorem~A.2 of \cite{W-H}
in the same way as Theorem~8.2 of \cite{W-H}.
\end{proof}

The computational complexity does not depend on the number $n$ of observation
in the above upper and lower bounds in Theorems~\ref{P4} and \ref{T4}.
Hence, 
the above upper and lower bounds are $O(1)$-computable.
These also attain asymptotic tightness in
the large and moderate deviations regimes as shown in the following corollaries
although the large and moderate deviations regimes immediate 
from the combination of G{\"a}rtner-Ellis theorem \cite{Gartner} and Corollary \ref{CC2}.

From Lemma~\ref{L23} and Theorems~\ref{P4} and \ref{T4}, we can derive the evaluation in the large deviation regime.
\begin{corollary} 
\Label{theorem:general-markov-ldp}
For arbitrary $\delta > 0$, we have
\begin{gather} 
\lim_{n \to \infty} - \frac{1}{n} \log \Pr\left\{ 
X^n - \phi'(0) 
 \ge \delta \right\} 
= \sup_{\theta \ge 0}[\theta( 
\phi'(0)+\delta) - \phi(\theta) ], \Label{a1}
\\
\lim_{n \to \infty} - \frac{1}{n} \log \Pr
\left\{ X^n - \phi'(0) 
 \le - \delta \right\}
 = \sup_{\theta \le 0}[\theta( 
\phi'(0)-\delta) - \phi(\theta)].\Label{a2}
\end{gather}
\end{corollary}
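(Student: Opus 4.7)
The plan is to combine the lower and upper bounds of Theorems \ref{P4} and \ref{T4} with Lemma \ref{L23}, using the fact that the correction terms $\overline{\delta}_{\rho}$, $\underline{\delta}_{\rho}$ are $O(1)$ in $n$ (being continuous in $\theta$ by Lemma \ref{L20f} and the construction of $A_\theta$) and therefore disappear after dividing by $n$ and taking $n \to \infty$. Set $a := \phi'(0) + \delta > \phi'(0)$; the case $a < \phi'(0)$ is symmetric.

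For the lower bound on the exponent (i.e., the $\liminf$ direction of \eqref{a1}), I would apply the first inequality of \eqref{27-32} in Theorem \ref{P4}. Dividing by $n$ gives
\begin{equation*}
-\frac{1}{n}\log \Pr\{X^n \ge a\} \;\ge\; \sup_{\theta \ge 0}\Bigl[\theta a - \phi(\theta) - \tfrac{1}{n}\overline{\delta}_{\rho}(\theta)\Bigr].
\end{equation*}
Pick any $\theta^\ast \ge 0$ (a concrete good choice is $\theta^\ast = {\phi'}^{-1}(a)$, which is nonnegative since $a > \phi'(0)$ and $\phi'$ is increasing); then the right-hand side is at least $\theta^\ast a - \phi(\theta^\ast) - \tfrac{1}{n}\overline{\delta}_{\rho}(\theta^\ast)$, whose $n \to \infty$ limit is $\theta^\ast a - \phi(\theta^\ast)$. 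Optimizing over $\theta^\ast \ge 0$ yields the desired $\liminf \ge \sup_{\theta \ge 0}[\theta a - \phi(\theta)]$.

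For the upper bound on the exponent (the $\limsup$ direction), I would use the third form of \eqref{27-33} in Theorem \ref{T4}, which expresses the bound as
\begin{equation*}
n D_{1+s}(\theta\|0) + \tfrac{1}{s}\bigl[\overline{\delta}_{\rho}((1+s)\theta) - \underline{\delta}_{\rho}(\theta)\bigr] - \tfrac{1+s}{s}\log\Bigl(1 - e^{-nD({\phi'}^{-1}(a)\|\theta) + \overline{\delta}_{\rho}({\phi'}^{-1}(a)) - \underline{\delta}_{\rho}(\theta)}\Bigr).
\end{equation*}
Fix any pair $(s,\theta)$ with $s > 0$ and $\theta > {\phi'}^{-1}(a)$. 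By strict convexity of $\phi$, the Bregman divergence satisfies $D({\phi'}^{-1}(a)\|\theta) > 0$, so after dividing by $n$ the first correction vanishes as $1/n$, and inside the logarithm $e^{-nD + O(1)} \to 0$, so the whole log term contributes $0$ in the limit. Therefore
\begin{equation*}
\limsup_{n\to\infty} -\tfrac{1}{n}\log \Pr\{X^n \ge a\} \;\le\; D_{1+s}(\theta\|0) \quad \text{for every } s>0,\ \theta>{\phi'}^{-1}(a).
\end{equation*}
Taking the infimum over such $(s,\theta)$ and invoking Lemma \ref{L23} (after noting $\phi(0)=0$, so $D_{1+s}(\theta\|0) = s^{-1}[\phi((1+s)\theta) - (1+s)\phi(\theta)]$) gives the matching bound $\sup_{\theta \ge 0}[\theta a - \phi(\theta)]$. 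Substituting $a = \phi'(0)+\delta$ produces \eqref{a1}; the argument for \eqref{a2} is identical with the obvious sign changes, using the second halves of Theorems \ref{P4}, \ref{T4} and Lemma \ref{L23}.

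The main technical point to watch is the legitimacy of exchanging the $n\to\infty$ limit with the infimum in the upper-bound step: since we only need $\limsup \le \inf_{s,\theta}(\cdots)$, it suffices to pass to the limit for each fixed $(s,\theta)$ and then take the infimum, which is elementary. The only nonroutine ingredient is continuity of $\overline{\delta}_{\rho}$ and $\underline{\delta}_{\rho}$ (already granted by Lemma \ref{L20f} and the continuity of $A_\theta$) to guarantee these correction terms remain bounded along the optimizing sequence.
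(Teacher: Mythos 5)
Your proposal is correct and follows essentially the same route as the paper: the lower bound on the exponent comes from Theorem~\ref{P4} with the $O(1)$ correction $\overline{\delta}_{\rho}$ vanishing after division by $n$, and the upper bound comes from the last form of \eqref{27-33} in Theorem~\ref{T4} for fixed $(s,\theta)$, where the $\delta$-corrections and the logarithmic term vanish in the limit, after which the infimum and Lemma~\ref{L23} give the matching Legendre-transform expression. This is exactly the paper's argument, stated with slightly more care about the order of limit and infimum.
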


\begin{proof}
As mentioned in the end of Section \ref{S2}, 
$\frac{d \phi}{d\theta}$ is a continuously increasing function.
Hence, the RHSs of \eqref{a1} and \eqref{a2} are continuous.

Lemma~\ref{L23} guarantees that 
(RHS of \eqref{27-32})/$n$ goes to 
RHS of \eqref{a1}.
In the RHS of \eqref{27-33}, for given $s > 0 $ and $ \theta > {\phi'}^{-1}(a)$,
the value $e^{-n D({\phi'}^{-1}(a)\|{\theta})
+\overline{\delta}_{\rho}({\phi'}^{-1}(a)) -\underline{\delta}_{\rho}(\theta)
} $ 
and 
$\frac{1}{ns}[\overline{\delta}_{\rho}((1+s)\theta) - \underline{\delta}_{\rho}(\theta)]$
go to zero as $n \to \infty$.
So, (RHS of \eqref{27-33})/$n$ goes to 
RHS of \eqref{a1}.
We can show \eqref{a2} in the same way.
\end{proof}

From Theorems~\ref{P4} and \ref{T4}, we can derive 
the evaluation in the moderate deviation regime.
\begin{corollary} \Label{theorem:general-markov-mdp}
For arbitrary $t \in (0,1/2)$ and $\delta > 0$, we have
\begin{gather} 
\lim_{n \to \infty} - \frac{1}{n^{1-2t}}  \log \Pr\left\{ 
X^n - \phi'(0) 
 \ge n^{-t} \delta  \right\}
= \frac{\delta^2}{2 \phi''(0)},
\Label{eq:general-markov-mdp-up} \\
 \lim_{n \to \infty} - \frac{1}{n^{1-2t}}  
\log \Pr
\left\{
X^n - \phi'(0) 
 \le - n^{-t} \delta  \right\} 
= \frac{\delta^2}{2 \phi''(0)}.
\end{gather}
\end{corollary}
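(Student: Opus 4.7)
The plan is to derive both the liminf and the limsup of $-n^{-(1-2t)}\log\Pr\{X^n - \phi'(0) \ge n^{-t}\delta\}$ by specializing Theorems~\ref{P4} and~\ref{T4} to the moving threshold $a_n := \phi'(0) + n^{-t}\delta$, and then Taylor-expanding $\phi$ to second order around $\theta=0$. Since $\phi\in C^2$ near $0$ with $\phi(0)=0$ (the trace-preserving property forces $\lambda_0=1$) and $\phi''(0)>0$, the inverse ${\phi'}^{-1}$ is differentiable at $\phi'(0)$ with slope $1/\phi''(0)$, so the natural optimizer $\theta_n^{\ast} := {\phi'}^{-1}(a_n)$ satisfies $\theta_n^{\ast} = n^{-t}\delta/\phi''(0) + o(n^{-t})$, and a direct quadratic expansion gives
\[
  n\theta_n^{\ast} a_n - n\phi(\theta_n^{\ast})
  \;=\; \frac{n^{1-2t}\delta^2}{2\phi''(0)} + o(n^{1-2t}).
\]

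For the liminf direction, I substitute $\theta=\theta_n^{\ast}$ into the lower bound of Theorem~\ref{P4}. By Lemma~\ref{L20f} the boundary correction $\overline{\delta}_\rho(\theta_n^{\ast})$ tends to $0$, hence is $O(1)$, so dividing by $n^{1-2t}$ and sending $n\to\infty$ yields $\liminf \ge \delta^2/(2\phi''(0))$.

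For the matching limsup bound, I plug the trial choices $\theta_n=\theta_n^{\ast}+n^{-\beta}$ and $s_n=n^{-\gamma}$ into the final displayed form of Theorem~\ref{T4}, for parameters $\beta,\gamma>0$ to be tuned. Three auxiliary estimates are needed: (i) $\overline{\delta}_\rho(\theta),\ \underline{\delta}_\rho(\theta)=O(\theta)$ as $\theta\to 0$, since $A_\theta$ is smooth in $\theta$ with $A_0=I$; (ii) differentiating the defining formula for $D_{1+s}(\theta\|0)$ in $s$ yields $D_{1+s}(\theta\|0)-D(\theta\|0)=O(s\theta^2)$ for small $s,\theta$; and (iii) $D(\theta_n^{\ast}\|\theta_n)=\tfrac12\phi''(0)(\theta_n-\theta_n^{\ast})^2(1+o(1))=\Theta(n^{-2\beta})$. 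Combining these, the upper bound becomes $nD(\theta_n\|0) + O(s_n n^{1-2t}) + O(n^{-t}/s_n) + O(s_n^{-1}e^{-\Theta(n^{1-2\beta})})$, whose leading term equals $n^{1-2t}\delta^2/(2\phi''(0))+o(n^{1-2t})$ because $\theta_n=\theta_n^{\ast}(1+o(1))$. An admissible pair, e.g.\ $\beta=(t+1/2)/2$ and $s_n=n^{-1/2}$, makes every error $o(n^{1-2t})$, which produces the matching limsup inequality.

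The hard part will be this simultaneous balancing of three scales: $s_n$ must vanish so that the R\'enyi-type Bregman divergence $D_{1+s_n}$ collapses to the ordinary Bregman divergence $D$, yet slowly enough that the $s_n^{-1}$-amplified boundary correction $s_n^{-1}[\overline{\delta}_\rho((1+s_n)\theta_n) - \underline{\delta}_\rho(\theta_n)]$ is negligible on the scale $n^{1-2t}$; simultaneously $\theta_n-\theta_n^{\ast}$ must shrink faster than $n^{-t}$ but slower than $n^{-1/2}$ so that the logarithmic remainder stays exponentially small. The hypothesis $t<1/2$ is precisely what leaves the admissible window $(t,1/2)$ for $\beta$ nonempty. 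The lower-tail statement follows by the identical argument, replacing ``$\theta\ge 0$'' and ``$\theta>{\phi'}^{-1}(a)$'' by ``$\theta\le 0$'' and ``$\theta<{\phi'}^{-1}(a)$'' throughout.
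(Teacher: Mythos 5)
Your proposal is correct, and its skeleton coincides with the paper's (which simply invokes Lemma~\ref{L23}, Theorems~\ref{P4} and~\ref{T4}, and the argument of Corollary~8.4 of \cite{W-H}): substitute the moving threshold $a_n=\phi'(0)+n^{-t}\delta$ into the two finite-$n$ bounds, use ${\phi'}^{-1}(a_n)=\delta n^{-t}/\phi''(0)+o(n^{-t})$, and Taylor-expand $\phi$ to second order so that $n\,{\phi'}^{-1}(a_n)a_n-n\phi({\phi'}^{-1}(a_n))=n^{1-2t}\delta^2/(2\phi''(0))+o(n^{1-2t})$. Where you deviate is in the treatment of the auxiliary parameters in Theorem~\ref{T4}: the paper's route keeps $s>0$ fixed and takes $\theta_n={\phi'}^{-1}(a_n)+\xi n^{-t}/\phi''(0)$ with fixed $\xi>0$, obtains the limsup bound $(1+s)(\delta+\xi)^2/(2\phi''(0))$, and only afterwards lets $\xi\downarrow0$ and $s\downarrow0$; this needs nothing about $\overline{\delta}_\rho,\underline{\delta}_\rho$ beyond the continuity statement of Lemma~\ref{L20f}, since the $1/s$ prefactor is harmless for fixed $s$. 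You instead take $s_n\to0$ and $\theta_n-{\phi'}^{-1}(a_n)=n^{-\beta}$ with $\beta\in(t,1/2)$, which gives the limit in one pass but makes the term $s_n^{-1}[\overline{\delta}_\rho((1+s_n)\theta_n)-\underline{\delta}_\rho(\theta_n)]$ delicate: with your choice $s_n=n^{-1/2}$, the mere $o(1)$ decay of Lemma~\ref{L20f} only covers $t<1/4$, so your argument genuinely relies on your estimate (i), $\overline{\delta}_\rho(\theta),\underline{\delta}_\rho(\theta)=O(\theta)$. That estimate is true, but its justification is slightly subtler than ``$A_\theta$ is smooth'': the normalization making $A_\theta-I$ positive semi-definite but not strictly positive definite divides an analytic eigenvector branch by its minimal eigenvalue, which is locally Lipschitz rather than differentiable; Lipschitz continuity is enough for $O(\theta)$, so the claim stands at the same level of rigor as the paper's own continuity assertion. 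Alternatively, choosing $s_n=n^{-(1-2t)/2}$ would make $s_n^{-1}\,o(1)=o(n^{(1-2t)/2})=o(n^{1-2t})$ and remove the need for (i) altogether, bringing your argument back to exactly the ingredients the paper uses. Your estimates (ii) on $D_{1+s}-D$ and (iii) on $D({\phi'}^{-1}(a_n)\|\theta_n)$, the exponential smallness of the logarithmic remainder for $\beta<1/2$, and the symmetric treatment of the lower tail are all sound.
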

\begin{proof}
Corollary \ref{theorem:general-markov-mdp} can be shown 
by using Lemma~\ref{L23} and Theorems~\ref{P4} and \ref{T4}
in the same way as the proof of Corollary 8.4 of \cite{W-H}.
\if0

To show the inequality $\ge$ in \eqref{eq:general-markov-mdp-up}, we employ 
\eqref{27-32}.
That is, we substitute 
$a_n:= \phi'(0)+\frac{\delta}{n^t}$
into $a$ in \eqref{27-32}.
Since $\frac{d {\phi'}^{-1}(a)}{d a}=
\frac{1}{\phi''({\phi'}^{-1}(a))}$, 
we have
${\phi'}^{-1}(a_n)= 
\frac{\delta}{\phi''(0) n^t}
+o(\frac{1}{n^t})\to 0$.
Thus, Relation (\ref{28-5}) implies
$\overline{\delta}_{\rho}({\phi'}^{-1}(a_n))) \to 0$.
Hence, 
\begin{align}
n^{2t} ({\phi'}^{-1}(a_n) a_n - \phi({\phi'}^{-1}(a_n)) )
\to 
\frac{\delta^2}{2 \phi''(0)}, \Label{27-30}
\end{align}
which implies that 
(RHS of \eqref{27-33})/$n^{1-2t}$ goes to 
RHS of \eqref{eq:general-markov-mdp-up}.
Applying (\ref{27-30}) to (\ref{27-32}), 
we obtain the part ``$\ge$'' in (\ref{eq:general-markov-mdp-up}).

To show the inequality $\le$ in \eqref{eq:general-markov-mdp-up}, we employ 
the final term of \eqref{27-33}.
That is, we substitute 
$a_n:= \phi'(0)+\frac{\delta}{n^t}$
and $\theta_n:= {\phi'}^{-1}(a_n)+ \frac{\xi}{n^t} 
\phi''(0)^{-1}$
into $a$ and $\theta$ in the final term of \eqref{27-33}.
Then, we have 
$\theta_n= \frac{\delta+\xi}{\phi''(0)n^t}
+o(\frac{1}{n^t})\to 0$.
Thus, Relation (\ref{28-5}) implies that
$\frac{1}{\theta}[\overline{\delta}_{\rho}((1+\theta)\theta_n) - \underline{\delta}_{\rho}(\theta_n)]\to 0$
and
$\overline{\delta}_{\rho}({\phi'}^{-1}(a_n)) -\underline{\delta}_{\rho}(\theta_n)
\to 0$.
We also have $n D(W_{{\phi'}^{-1}(a_n)}\|W_{\theta_n}) \to \infty$. 
Hence, (\ref{27-21}) yields that
\begin{align}
\begin{split}
& \lim_{n \to \infty}
 \frac{1}{n^{1-2t}}
\Bigg[ \inf_{\theta > 0 \atop \theta > {\phi'}^{-1}(a_n)} 
n D_{1+\theta}(W_\theta\|W_0)
+\frac{1}{\theta}[\overline{\delta}_{\rho}((1+\theta)\theta) - \underline{\delta}_{\rho}(\theta)] 
 \\
& -\frac{1+\theta}{\theta}
 \log \left(1- e^{-n D(W_{{\phi'}^{-1}(a_n)}\|W_{\theta})
+\overline{\delta}_{\rho}({\phi'}^{-1}(a_n)) -\underline{\delta}_{\rho}(\theta)
} \right)\Bigg] \\
\le &
 \lim_{n \to \infty}
 \frac{1}{n^{1-2t}}
\Bigg[n D_{1+\theta}(W_{\theta_n}\|W_0)
+\frac{1}{\theta}[\overline{\delta}_{\rho}((1+\theta)\theta_n) - \underline{\delta}_{\rho}(\theta_n)]
 \\
& -\frac{1+\theta}{\theta}
 \log \left(1- e^{-n D(W_{{\phi'}^{-1}(a_n)}\|W_{\theta_n})
+\overline{\delta}_{\rho}({\phi'}^{-1}(a_n)) -\underline{\delta}_{\rho}(\theta_n)
} \right) \Bigg] \\
=&
\lim_{n \to \infty}
\frac{n}{n^{1-2t}} D_{1+\theta}(W_{\theta_n}\|W_0)
=
\frac{(\delta+\xi)^2}{2 \phi''(0)}
(1+\theta).
\end{split}
\Label{27-31}
\end{align}
Finally, we take the limits $\xi\to 0$ and $s\to 0$.
Then,
applying (\ref{27-31})
to (\ref{27-33}), 
we obtain the part ``$\ge$'' in (\ref{eq:general-markov-mdp-up}).
\fi
\end{proof}

\section{Calculation formula of $\phi''(0)$}\label{cal}

When $\Lambda = \sum_\omega C_\omega$ is primitive, 
as a quantum version of Theorem~7.7 of \cite{W-H},
this section gives a useful calculation formula of $\phi''(0)$, which is used for 
the central limit theorem and 
the evaluation in the moderate deviation regime.
In this derivation, Condition (\ref{p-mean}) for the primitivity plays an essential role.

\begin{proposition}
	Let $\tilde{\Lambda}$ be the trace-preserving completely positive map on $\mathcal{T}(\mathcal{H})$ 
	which maps any Hermitian matrix $H$ to $(\Tr H)\rho_0$. 
	Then, the map $Z := (\iota - (\Lambda - \tilde{\Lambda}))^{-1}$ exists and the equations 
	\begin{gather}
		\tilde{\Lambda} = \lim_{n\to\infty} \Lambda^n,\label{eq1}\\
		Z = \sum_{n=0}^\infty (\Lambda - \tilde{\Lambda})^n
		= \iota + \sum_{n=1}^\infty (\Lambda^n - \tilde{\Lambda})\nonumber
	\end{gather}
	hold, where $\iota$ is the identity map on $\mathcal{T}(\mathcal{H})$.
\end{proposition}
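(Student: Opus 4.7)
The plan is to exploit primitivity of $\Lambda$ to reduce everything to the identity $(\Lambda-\tilde\Lambda)^n = \Lambda^n - \tilde\Lambda$ for $n\ge 1$, from which both the existence of $Z$ and both series expressions follow at once.

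First I would read off (\ref{eq1}) from Condition (\ref{p-mean}) of Proposition~\ref{qprim}. Since $\Lambda$ is trace-preserving and primitive, $r(\Lambda)=1$ and we may take $A_0=I$ (because $I$ is a strictly positive definite fixed point of $\Lambda^\ast$ and, by the corollary following Proposition~\ref{qprim}, the eigenspace for $\iota$ is one-dimensional). Condition (\ref{p-mean}) then gives $\Lambda^n(H)\to(\Tr H)\rho_0=\tilde\Lambda(H)$ for every Hermitian $H$, and since $\mathcal T(\mathcal H)$ is finite-dimensional this convergence is in operator norm, yielding (\ref{eq1}).

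Next I would verify the three algebraic identities
\begin{equation*}
\Lambda\tilde\Lambda = \tilde\Lambda\Lambda = \tilde\Lambda^2 = \tilde\Lambda.
\end{equation*}
All three are one-line computations: $\Lambda\tilde\Lambda(H)=(\Tr H)\Lambda(\rho_0)=(\Tr H)\rho_0$ because $\rho_0$ is fixed by $\Lambda$; $\tilde\Lambda\Lambda(H)=(\Tr\Lambda(H))\rho_0=(\Tr H)\rho_0$ because $\Lambda$ is trace-preserving; and $\tilde\Lambda^2(H)=(\Tr H)(\Tr\rho_0)\rho_0=(\Tr H)\rho_0$. An easy induction on $n$ then gives $(\Lambda-\tilde\Lambda)^n=\Lambda^n-\tilde\Lambda$ for every $n\ge 1$, since the cross terms in the binomial-like expansion collapse via the identities above.

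Combining the induction with (\ref{eq1}) shows $(\Lambda-\tilde\Lambda)^n=\Lambda^n-\tilde\Lambda\to 0$ as $n\to\infty$, so the spectral radius of $\Lambda-\tilde\Lambda$ is strictly less than $1$. Consequently $\iota-(\Lambda-\tilde\Lambda)$ is invertible and the Neumann series converges:
\begin{equation*}
Z = \bigl(\iota-(\Lambda-\tilde\Lambda)\bigr)^{-1} = \sum_{n=0}^\infty (\Lambda-\tilde\Lambda)^n.
\end{equation*}
Substituting $(\Lambda-\tilde\Lambda)^n=\Lambda^n-\tilde\Lambda$ for $n\ge 1$ and keeping the $n=0$ term as $\iota$ yields the second form $Z=\iota+\sum_{n=1}^\infty(\Lambda^n-\tilde\Lambda)$, completing the proof.

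The only substantive point is the invocation of primitivity to get (\ref{eq1}); once that is in hand the rest is routine algebra and a geometric series. The mildest subtlety is making sure one takes $A_0=I$ in Condition (\ref{p-mean}), which is justified by trace-preservation together with the geometric simplicity of the eigenvalue $1$ guaranteed by the corollary to Proposition~\ref{qprim}.
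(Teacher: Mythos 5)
Your proof is correct and follows essentially the same route as the paper: obtain $\lim_n\Lambda^n=\tilde\Lambda$ from Condition (\ref{p-mean}), use the identities $\Lambda\circ\tilde\Lambda=\tilde\Lambda\circ\Lambda=\tilde\Lambda^2=\tilde\Lambda$ to get $(\Lambda-\tilde\Lambda)^n=\Lambda^n-\tilde\Lambda$, conclude that the spectral radius of $\Lambda-\tilde\Lambda$ is below one (the paper phrases this eigenvalue-by-eigenvalue, you via norm convergence of the powers, which is equivalent in finite dimension), and then invoke the Neumann series for $Z$. Your extra care in pinning down $A_0=I$ is a harmless elaboration of a fact the paper states just before the corollary to Proposition~\ref{qprim}.
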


The map $Z$ is called the fundamental matrix in the classical case \cite{K-S},
and so $Z$ can be regarded as a quantum version of the fundamental matrix.

\begin{proof}
	The equation (\ref{eq1}) follows from Condition (\ref{p-mean}). 
	From the definition, the equation 
	$\tilde{\Lambda}\circ\Lambda = \Lambda\circ\tilde{\Lambda}
	= \tilde{\Lambda}^2 = \tilde{\Lambda}$ 
	holds. We show that {the magnitude of} any eigenvalue of $\Lambda - \tilde{\Lambda}$ is smaller than one. 
	Let $\lambda\in\mathbb{C}$ be an eigenvalue of $\Lambda - \tilde{\Lambda}$ and 
	$M$ be an eigenvector associated with $\lambda$ of $\Lambda - \tilde{\Lambda}$. 
	Then, any natural number $n$ satisfies $\lambda^n M = (\Lambda - \tilde{\Lambda})^n(M) = (\Lambda^n - \tilde{\Lambda})(M)$. 
	The equation (\ref{eq1}) implies that $\lambda^n M = (\Lambda^n - \tilde{\Lambda})(M) \to 0$. 
	Hence, we obtain $\lambda^n\to0$, which means that the magnitude of $\lambda$ is smaller than one.
	\par
	Since {the magnitude of} any eigenvalue of $\Lambda - \tilde{\Lambda}$ is smaller than one, the map $Z$ exists. 
	Let $Z' := \sum_{n=0}^\infty (\Lambda - \tilde{\Lambda})^n$. Then, the equation 
	$Z' \circ (\iota - (\Lambda - \tilde{\Lambda})) = \iota$ means $Z=Z'$.
\end{proof}

\begin{theorem}
	Define the random variable $X$ as $X(\omega) = x_\omega$ and 
	the map $C_X$ on $\mathcal{T}(\mathcal{H})$ as $C_X := \sum_\omega X(\omega) C_\omega$. 
	Then, 
	\begin{equation}
		\phi''(0) = \mathsf{V}_{\rho_0}[X] + 2\Tr C_X\circ(Z - \tilde{\Lambda})\circ C_X(\rho_0),
	\label{Y1}
	\end{equation}
	where $\mathsf{V}_\rho[X']$ is the variance of a random variable $X'$ under an initial state $\rho$.
\end{theorem}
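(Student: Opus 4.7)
The plan is to compute $\phi''(0)$ by differentiating the Perron-Frobenius eigenequation $\Lambda_\theta \rho_\theta = \lambda_\theta \rho_\theta$ twice at $\theta = 0$ and then solving for the unknown $\rho_0' := (d\rho_\theta/d\theta)|_{\theta=0}$ using the fundamental matrix $Z$. Primitivity is crucial here: it ensures that $r(\Lambda)=1$ is a simple eigenvalue, so that $\lambda_\theta$ and $\rho_\theta$ are smooth functions of $\theta$ (via the implicit function theorem applied to the characteristic equation), and it also ensures the existence of $Z = (\iota - (\Lambda - \tilde{\Lambda}))^{-1}$ by the preceding proposition.

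First I would record the derivatives of $\Lambda_\theta = \sum_\omega e^{\theta x_\omega} C_\omega$, namely $\Lambda_0' = C_X$ and $\Lambda_0'' = C_{X^2} := \sum_\omega x_\omega^2 C_\omega$. Differentiating the eigenequation once at $\theta=0$ yields
\begin{equation*}
C_X(\rho_0) + \Lambda(\rho_0') = \lambda_0' \rho_0 + \rho_0',
\end{equation*}
and taking the trace, together with $\Tr\Lambda(\rho_0') = \Tr\rho_0'$ and the normalization $\Tr\rho_0' = 0$ (since $\Tr\rho_\theta \equiv 1$), recovers the known identity $\lambda_0' = \phi'(0) = \Tr C_X(\rho_0)$. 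Differentiating the eigenequation twice and taking the trace similarly gives
\begin{equation*}
\lambda_0'' = \Tr C_{X^2}(\rho_0) + 2\Tr C_X(\rho_0').
\end{equation*}
Combining this with $\phi''(0) = \lambda_0'' - (\lambda_0')^2$ produces
\begin{equation*}
\phi''(0) = \mathsf{V}_{\rho_0}[X] + 2\Tr C_X(\rho_0').
\end{equation*}

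Next I would solve for $\rho_0'$. Rewriting the first-order equation as $(\iota - \Lambda)(\rho_0') = (C_X - \phi'(0)\iota)(\rho_0)$ and adding $\tilde{\Lambda}(\rho_0') = (\Tr\rho_0')\rho_0 = 0$ to the left side, I obtain
\begin{equation*}
(\iota - (\Lambda - \tilde{\Lambda}))(\rho_0') = (C_X - \phi'(0)\iota)(\rho_0),
\end{equation*}
so $\rho_0' = Z\bigl((C_X - \phi'(0)\iota)(\rho_0)\bigr)$. Using that $(\Lambda - \tilde{\Lambda})(\rho_0) = 0$, the series for $Z$ collapses to $Z(\rho_0) = \rho_0$; hence $\Tr C_X Z(\rho_0) = \phi'(0)$, and a direct computation gives $\Tr C_X \tilde{\Lambda} C_X(\rho_0) = \phi'(0)^2$. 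Combining these,
\begin{equation*}
2\Tr C_X(\rho_0') = 2\Tr C_X Z C_X(\rho_0) - 2\phi'(0)^2 = 2\Tr C_X \circ (Z - \tilde{\Lambda})\circ C_X(\rho_0),
\end{equation*}
which, substituted into the formula for $\phi''(0)$, yields \eqref{Y1}.

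The main obstacle, and the reason primitivity (rather than mere irreducibility) is invoked, is the need for the asymptotic projector $\tilde{\Lambda} = \lim_n \Lambda^n$ to exist so that $Z$ is well defined and the equation $(\iota - (\Lambda-\tilde{\Lambda}))(\rho_0') = (C_X - \phi'(0)\iota)(\rho_0)$ has a unique solution within the trace-zero subspace. Everything else is bookkeeping: checking that $\rho_\theta$ and $\lambda_\theta$ are $C^2$ in $\theta$, which follows from the fact that $r(\Lambda)$ is a simple eigenvalue under primitivity together with the implicit function theorem.
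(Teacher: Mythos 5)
Your derivation is correct, but it follows a genuinely different route from the paper. The paper never differentiates the eigenequation: it first shows $\phi''(0)=\lim_{n\to\infty}(1/n)\phi''_{n,\rho}(0)=\lim_{n\to\infty}(1/n)\mathsf{V}_{\rho}[nX^n]$ (using the convergence of cumulant generating functions from Theorem~\ref{L6} together with the moment-convergence lemma in the Appendix), then, with $\rho=\rho_0$, expands the variance of $nX^n$ into stationary autocovariances $\Tr C_X\circ\Lambda^{k}\circ C_X(\rho_0)$ and identifies the limit through a Ces\'aro-mean representation of $Z-\iota$; primitivity enters only through $\Lambda^n\to\tilde{\Lambda}$. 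You instead perform second-order perturbation of the Perron--Frobenius eigenequation, eliminate $\rho_0''$ by taking traces, and solve the first-order equation $(\iota-(\Lambda-\tilde{\Lambda}))(\rho_0')=(C_X-\phi'(0)\iota)(\rho_0)$ with the fundamental matrix; all the algebraic identities you use ($Z(\rho_0)=\rho_0$, $\Tr C_X\circ\tilde{\Lambda}\circ C_X(\rho_0)=\phi'(0)^2$, $\Tr\rho_0'=0$) check out, and they do reproduce \eqref{Y1}. Your route is shorter, avoids the covariance bookkeeping and the Appendix lemma, and gives the explicit by-product $\rho_0'=Z\bigl((C_X-\phi'(0)\iota)(\rho_0)\bigr)$; its price is that it leans on twice-differentiability of $\theta\mapsto(\lambda_\theta,\rho_\theta)$ at $\theta=0$, which requires the Perron--Frobenius eigenvalue to be an isolated, algebraically simple eigenvalue so that analytic (Kato-type) perturbation theory applies --- a point you flag but do not prove, and which goes slightly beyond the first-order differentiability the paper invokes via the implicit function theorem (though the paper, too, tacitly uses the existence of $\phi''(0)$). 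The paper's approach, by contrast, stays at the level of the finite-$n$ cumulant generating function and yields the probabilistic reading of \eqref{Y1} as the static variance plus twice the summed autocovariances.
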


The symbols $\rho_0$, $\tilde{\Lambda}$, $C_X$ and $Z$ can be calculated 
from the instrument $\{ C_\omega \}_{\omega\in\Omega}$ and the random variable $X$. 
Hence, we can calculate $\phi''(0)$ using the above formula.

\begin{remark}
Here, we explain the relation with
the results by \cite{H-G,GK}.
The paper \cite{H-G} considered the case when the correlation is generated by 
an isometry, and derived the central limit theorem for the sample mean of 
the measurement outcomes.
The combination of the isometry and the measurement can be 
written as an instrument $\{C_\omega\}_{\omega \in \Omega}$.
However, the paper \cite{H-G} addressed multiple variables. 
Hence, when we focus on a single variable,
our model contains the model in \cite{H-G} as a special case.
In addition, the paper \cite{H-G} did not give the relation
between the asymptotic variance and the second derivative of $\phi$.

In contrast, the paper \cite{GK} addressed local asymptotic normality
in the context of system identification for quantum Markov chains.
To discuss local asymptotic normality, 
the paper \cite{GK} calculated the Fisher information matrix.
Hence, it did not discuss the random variable $X^n$.
\end{remark}

\begin{proof}
	Using the equation $\phi''_{n,\rho}(0) = \mathsf{V}_\rho[nX^n]$, Lemma~\ref{moment} in Appendix, and Theorem~\ref{L6}, 
	we have 
	\begin{align*}
		&\lim_{n\to\infty} (1/n)\phi''_{n,\rho}(0)
		= \lim_{n\to\infty} \mathsf{V}_\rho[\sqrt{n} X^n]
		= \lim_{n\to\infty} \mathsf{V}_\rho[\sqrt{n}(X^n - \phi'(0))]\\
		=& \lim_{n\to\infty}\Bigl( \mathsf{E}_\rho\bigl[ \bigl( \sqrt{n}(X^n-\phi'(0)) \bigr)^2 \bigr]
		- \mathsf{E}_\rho[\sqrt{n}(X^n-\phi'(0))]^2 \Bigr)
		= \phi''(0),
	\end{align*}
	where $\mathsf{E}_\rho[X']$ denotes the expectation of a random variable $X'$ under an initial state $\rho$. 
	Put $\rho=\rho_0$. Noting $\Lambda(\rho)=\rho$, we have 
	\begin{align*}
		\mathsf{E}_\rho[X_i] &= \sum_{\omega_1,\ldots,\omega_n} X(\omega_i)\Tr C_{\omega_n}\circ\cdots\circ C_{\omega_1}(\rho)\\
		&= \sum_{\omega_i} X(\omega_i)\Tr \Lambda^{n-i} C_{\omega_i}\Lambda^{i-1}(\rho)
		= \sum_{\omega_i} X(\omega_i)\Tr C_{\omega_i}(\rho) = \mathsf{E}_\rho[X].
	\end{align*}
	Similarly, $\mathsf{E}_\rho[X_i^2] = \mathsf{E}_\rho[X^2]$ holds. 
	Combining these equations, we have 
	\begin{align*}
		&\phi''_{n,\rho}(0) = \mathsf{V}_\rho[nX^n]
		= \mathsf{E}_\rho\Bigl[ \Bigl( \sum_{i=1}^n X_i \Bigr)^2 \Bigr] - \mathsf{E}_\rho\Bigl[ \sum_{i=1}^n X_i \Bigr]^2\\
		=& \sum_{i=1}^n \mathsf{E}_\rho[X_i^2] + 2\sum_{i<j} \mathsf{E}_\rho[X_iX_j] - \mathsf{E}_\rho\Bigl[ \sum_{i=1}^n X_i \Bigr]^2\\
		=& n\mathsf{E}_\rho[X^2] + 2\sum_{i<j} \mathsf{E}_\rho[X_iX_j] - n^2\mathsf{E}_\rho[X]^2\\
		=& n\mathsf{V}_\rho[X] - n(n-1)\mathsf{E}_\rho[X]^2\nonumber\\
		&\quad+ 2\sum_{i<j} \sum_{\omega_i,\omega_j} X(\omega_i) X(\omega_j)
		\Tr\,\Lambda^{n-j}\circ C_{\omega_j}\circ\Lambda^{j-i-1}\circ C_{\omega_i}\circ\Lambda^{i-1}(\rho)\\
		\overset{(a)}{=}& n\mathsf{V}_\rho[X] - n(n-1)\mathsf{E}_\rho[X]^2\\
		&\quad+ 2\sum_{i<j} \sum_{\omega_i,\omega_j} X(\omega_i) X(\omega_j) \Tr C_{\omega_j}\circ\Lambda^{j-i-1}\circ C_{\omega_i}(\rho)\\
		=& n\mathsf{V}_\rho[X] - n(n-1)\mathsf{E}_\rho[X]^2
		+ 2\sum_{i<j} \Tr C_X\circ\Lambda^{j-i-1}\circ C_X(\rho)\\
		=& n\mathsf{V}_\rho[X] - n(n-1)\mathsf{E}_\rho[X]^2
		+ 2\sum_{k=0}^{n-2} (n-k-1)\Tr C_X\circ\Lambda^k\circ C_X(\rho)\\
		=& n\mathsf{V}_\rho[X] + 2\sum_{k=0}^{n-2} (n-k-1)\Tr C_X\circ(\Lambda^k - \tilde{\Lambda})\circ C_X(\rho),
	\end{align*}
	where the equation $\Lambda(\rho)=\rho$ has been used to obtain the equality $(a)$ and 
	the last equality follows from 
	$\Tr\,C_X\circ\tilde{\Lambda}\circ C_X(\rho) = (\Tr\,C_X(\rho))^2 = \mathsf{E}_\rho[X]^2$.
	\par
	Taking the Ces\'aro mean for $Z - \iota = \sum_{n=1}^\infty (\Lambda^n - \tilde{\Lambda})$, we have 
	\begin{equation*}
		Z - \iota = \lim_{n\to\infty} (1/n)\sum_{k=1}^n (n-k+1)(\Lambda^k - \tilde{\Lambda}).
	\end{equation*}
	Hence, we obtain \eqref{Y1} as  
	\begin{align*}
		&\phi''(0) = \lim_{n\to\infty} (1/n)\phi''_{n,\rho}(0)\\
		=& \mathsf{V}_\rho[X]
		+ \lim_{n\to\infty} (2/n)\sum_{k=0}^{n-2} (n-k-1)\Tr C_X\circ(\Lambda^k - \tilde{\Lambda})\circ C_X(\rho)\\
		=& \mathsf{V}_\rho[X] + 2\Tr C_X\circ(Z - \iota)\circ C_X(\rho)
		+ 2\Tr C_X\circ(\iota - \tilde{\Lambda})\circ C_X(\rho)\\
		=& \mathsf{V}_\rho[X] + 2\Tr C_X\circ(Z - \tilde{\Lambda})\circ C_X(\rho).
	\end{align*}
\end{proof}

\section{Relation to finitely correlated states}\Label{S8}
The paper \cite{Ogata} considered finitely correlated states.
Finitely correlated states play a key role in 
analysis on quantum spin chains.
The meaning of large deviation type evaluation in such a physical system is explained in \cite{Ogata,BTG}.

Here, we use a notation for a Hermitian matrix $B$ and a real number $a$: 
$\{B>a\}$ is defined to be the projection $\sum_{j:b_j>a}E_j$ 
when the Hermitian matrix $B$ has the spectral decomposition
$\sum_{j}b_j E_j $.
In the one-dimensional case of this model, we consider a TP-CP map $\Gamma$ from 
${\cal T}({\cal H})$ to
${\cal T}(\C^d)\otimes {\cal T}({\cal H})$.
Then, from an initial state $\rho$ on ${\cal H}$,
we generate the state on $(\C^d)^{\otimes n} $ as
\begin{equation}
\rho_n:= \underbrace{ (\id_{{{\cal T}\left( (\C^d)^{\otimes (n-1)} \right)}} \otimes \Gamma )
\circ \cdots \circ
(\id_{{{\cal T}(\C^d)}} \otimes \Gamma )
\circ \Gamma (\rho)}_{n}.
\end{equation}
Generally, in this model,
we consider the behavior of the average of 
the observables across several systems.
To discuss the relation with our paper, 
we focus only on the case when 
the observables are given as a Hermitian matrix $A$ on the single system 
$\C^d$.

In this special case, 
the paper \cite{Ogata} 
discussed the probability $\Tr \rho_n \{ (1/n)\sum_{i=1}^n A_i > a \}$,
where $A_i$ is defined as $\underbrace{I \otimes \cdots \otimes I}_{i-1}\otimes A \otimes 
\underbrace{I \otimes \cdots \otimes I}_{n-i}$.
That is, it derived the exponential decreasing rate of 
the probability
$\Tr \rho_n \{ (1/n)\sum_{i=1}^n A_i > a \}$
when $n$ goes to infinity.
To convert this model to our model,
we make the spectral decomposition of $A$ as 
$A= \sum_{\omega} x_\omega E_\omega$.
Then, we define the CP map $C_\omega$ on ${\cal H}$ as 
$C_\omega(\rho):= \Tr_{\C^d} (E_\omega \otimes I) \Gamma(\rho)$.
We define the $i$-th variable $X_i$ to be $x_{\omega_i}$ when $\omega_i$ is the $i$-th observation.
Then, we have the relation
\begin{equation}
\Pr\left\{ X^n > a \right\} 
=\Tr \rho_n \Bigl( \Bigl\{ \frac{1}{n}\sum_{i=1}^n A_i > a \Bigr\}\otimes I \Bigr), \Label{LPT}
\end{equation}
which shows that 
our model includes finitely correlated states.

Conversely, when
given a set of CP maps (an instrument) $C=\{C_\omega\}_{\omega \in \Omega}$, 
the map $\sum_{\omega \in \Omega} C_\omega$ is trace-preserving,
we define the TP-CP map $
\Gamma(\rho):= \sum_{\omega \in \Omega} |\omega \rangle \langle \omega| \otimes C_\omega(\rho)$
from 
${\cal T}({\cal H})$ to
${\cal T}(\C^d)\otimes {\cal T}({\cal H})$
and define the Hermitian matrix $A := \sum_\omega x_\omega \ketbra{\omega}{\omega}$, 
where $d$ denotes the number of elements of $\Omega$.
Then, the relation \eqref{LPT} holds.
Therefore, we can conclude that our model is equivalent to the model of finitely correlated states
when the observable is given as a Hermitian matrix $A$ on the single system 
$\C^d$.

When we consider the behavior of the average of 
the observables across more than one systems,
we need to care about the non-commutativity between several observables in general.
This type analysis is more difficult than that in this paper, but 
if we obtain analysis similar to Theorem~\ref{lemma:mgf-finite-evaluation}, 
we can derive analysis similar to Theorems~3--5 by using the same discussion.
Therefore, Corollary 4 can be regarded as a special case of 
\cite[Theorem 1.2 \& (27)]{Ogata}.
However, in this case,
our analysis has the following two advantages over the analysis of the paper 
\cite{Ogata}.
Indeed, the paper \cite{Ogata} did not give any non-asymptotic evaluation. 
On the other hand, we have given the non-asymptotic evaluation of the probability \eqref{LPT} as Theorems~3 and 4,
and have shown their asymptotic tightness.
In this sense, the analysis of this paper is more advanced 
in this special case of finitely correlated states than that in \cite{Ogata}.
The contribution of this paper is the derivation of the non-asymptotic evaluation to achieve the asymptotic tightness.

\section{Discussion}
This paper has addressed a hidden Markovian process with quantum hidden system,
and has derived several asymptotic behaviors of the sample mean, namely 
the central limit theorem, the large and moderate deviations for the tail probability
while no existing paper treated the non-asymptotic behaviors of such a hidden Markovian process with quantum hidden system.
Using the quantum version of the Perron-Frobenius theorem, we have derived upper and lower bounds of the cumulant generating function.
Based on these bounds, we have obtained 
the above result by using the same method as \cite{W-H}.

In particular, Corollary \ref{Co1} can be regarded as the central limit theorem for 
the hidden Markov process with quantum hidden system.
In the classical case, as the refinement of Corollary \ref{Co1},
the paper \cite[Theorem~2]{Herve} showed 
the Markov version of the Berry-Esseen Theorem.
So, it is interesting problem to derive the Berry-Esseen Theorem 
for the hidden Markov process with quantum hidden system.

Further, our method relies only on the quantum version of the Perron-Frobenius theorem, 
and the quantum version of the Perron-Frobenius theorem can be extended to 
a finite-dimensional vector space with a general closed convex cone, in which 
the dual convex cone need not be equal to the original convex cone \cite{Yoshida,Vander,Barker,Barker-Schneider}. 
Therefore, our result can be extended to a more general setting, 
e.g., general probabilistic theory \cite{BBLW,BBLW2,BHK,Gudder1,Gudder2,KMI}.

When the system is given as a hidden system
$X_1, \ldots,X_n  $,
the memory effect does not vanish,
i.e., 
$P_{X_i|X_{i-1},\ldots X_{i-k}}(x_i|x_{i-1},\ldots x_{i-k})$
cannot be simplified to 
$P_{X_i|X_{i-1},\ldots X_{i-k'}}(x_i|x_{i-1},\ldots x_{i-k'})$
with $k'<k$ \cite{hidden}.
Hence, the outcome has long-period memory.
When discussing some information theoretic problem,
we need to discuss information theoretical quantity, e.g., entropy and conditional entropy instead of the sample mean \cite{Markov-HW}.
In this case, we need to be careful with such a memory.
Hence, it is not sufficient to discuss the sample mean of a random variable for this purpose.
For such an application, we need more complicated calculation.

\section*{Appendix}

For readers' convenience, we prove a few well-known lemmas used in Section~\ref{cal}. 
The textbook \cite[Corollary in p.~338]{Billingsley} has a more general statement than Lemma~\ref{moment}.

\begin{lemma}\label{bounded}
	Let $\mu_n$ and $\mu$ be probability distributions on $\mathbb{R}$, 
	and $\mathcal{C}$ be the set of all continuous points of the cumulative distribution function of $\mu$. 
	Assume $\mu_n\to\mu$. 
	Then, any continuous function $f: \mathbb{R}\to\mathbb{R}$ and 
	any two real numbers $a,b\in\mathcal{C}$ with $a<b$ satisfy 
	\begin{equation*}
		\lim_{n\to\infty} \int_{(a,b]} f(x)\,\mu_n(dx) = \int_{(a,b]} f(x)\,\mu(dx).
	\end{equation*}
	Further, any continuous function $f: \mathbb{R}\to\mathbb{R}$ with $\sup_{x>0} |f(x)| < \infty$ and 
	any real number $a\in\mathcal{C}$ satisfy 
	\begin{equation*}
		\lim_{n\to\infty} \int_{(a,\infty)} f(x)\,\mu_n(dx) = \int_{(a,\infty)} f(x)\,\mu(dx).
	\end{equation*}
	Here, $\mu_n\to\mu$ means that 
	$\lim_{n\to\infty} \int_{\mathbb{R}} f(x)\,\mu_n(dx) = \int_{\mathbb{R}} f(x)\,\mu(dx)$ 
	for any bounded continuous function $f: \mathbb{R}\to\mathbb{R}$.
\end{lemma}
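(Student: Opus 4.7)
The plan is to reduce both statements to weak convergence against bounded continuous test functions by sandwiching the relevant indicators between continuous cutoffs, exploiting that $a,b\in\mathcal{C}$ means $\mu(\{a\})=\mu(\{b\})=0$. For the first assertion, for each small $\eta>0$ chosen so that $a\pm\eta$ and $b\pm\eta$ all lie in $\mathcal{C}$ (possible since the CDF of $\mu$ has at most countably many jumps, so $\mathcal{C}$ is cocountable), I would pick a continuous function $g_\eta:\mathbb{R}\to[0,1]$ equal to $1$ on $[a,b]$ and vanishing outside $(a-\eta,b+\eta)$. Writing $h_\eta:=g_\eta-\mathbf{1}_{(a,b]}\ge 0$, supported on the thin strips $(a-\eta,a]\cup(b,b+\eta]$, one has
\begin{equation*}
\int_{(a,b]} f\,d\nu=\int fg_\eta\,d\nu-\int fh_\eta\,d\nu
\end{equation*}
for any finite measure $\nu$. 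Since $fg_\eta$ is bounded continuous (compact support and $f$ continuous), weak convergence yields $\int fg_\eta\,d\mu_n\to\int fg_\eta\,d\mu$. With $M_\eta:=\sup_{[a-\eta,b+\eta]}|f|<\infty$,
\begin{equation*}
\Bigl|\int fh_\eta\,d\mu_n\Bigr|\le M_\eta\bigl(\mu_n((a-\eta,a])+\mu_n((b,b+\eta])\bigr),
\end{equation*}
and the Portmanteau equivalence (applied at the four continuity points $a\pm\eta,b\pm\eta$) forces the right-hand side to converge to $M_\eta(\mu((a-\eta,a])+\mu((b,b+\eta]))$. Combining the two estimates,
\begin{equation*}
\limsup_{n\to\infty}\Bigl|\int_{(a,b]} f\,d\mu_n-\int_{(a,b]} f\,d\mu\Bigr|\le 2M_\eta\bigl(\mu((a-\eta,a])+\mu((b,b+\eta])\bigr),
\end{equation*}
and letting $\eta\downarrow 0$ along continuity points kills the right-hand side by continuity of $\mu$, since $\mu(\{a\})=\mu(\{b\})=0$.

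For the second assertion I would fix $\varepsilon>0$, set $C:=\sup_{x>0}|f(x)|<\infty$, observe that $f$ is bounded on $(a,\infty)$ (on $(0,\infty)$ by hypothesis and on the compact $[\min(a,0),0]$ by continuity), and pick a continuity point $T\in\mathcal{C}$ with $T>\max(a,0)$ and $\mu((T,\infty))<\varepsilon$. The split
\begin{equation*}
\int_{(a,\infty)} f\,d\mu_n=\int_{(a,T]} f\,d\mu_n+\int_{(T,\infty)} f\,d\mu_n
\end{equation*}
reduces the first piece to the first assertion, while the tail satisfies $\bigl|\int_{(T,\infty)} f\,d\mu_n\bigr|\le C\mu_n((T,\infty))\to C\mu((T,\infty))<C\varepsilon$ at the continuity point $T$. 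Arbitrariness of $\varepsilon$ then delivers the convergence.

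The main subtlety is the control of $\mu_n$-mass on the boundary strips $(a-\eta,a]\cup(b,b+\eta]$ and on the tail $(T,\infty)$: this invokes the Portmanteau equivalence that weak convergence implies $\mu_n(A)\to\mu(A)$ for every Borel $A$ with $\mu(\partial A)=0$, itself a consequence of the same sandwich technique applied to $\mathbf{1}_A$. Given this, the only care required is selecting $\eta$ and $T$ outside the at-most-countable set of atoms of $\mu$; once that is arranged the convergences match cleanly, and the limits $\eta\downarrow 0$ and $\varepsilon\downarrow 0$ finish both statements.
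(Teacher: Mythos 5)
Your argument is correct; it differs from the paper's in the key decomposition for the first assertion. The paper first proves $F_n(c)\to F(c)$ for every $c\in\mathcal{C}$ via the standard piecewise-linear sandwich $h_\pm$, and then uses an \emph{exact} identity: replacing $f$ by the bounded continuous $h$ that agrees with $f$ on $[a,b]$ and is constant (equal to $f(a)$, resp.\ $f(b)$) on $(-\infty,a]$ and on $[b,\infty)$, one has $\int_{(a,b]}f\,d\nu=\int_{\mathbb{R}} h\,d\nu-f(a)F(a)-f(b)\bigl(1-F(b)\bigr)$ for every probability measure $\nu$, and passing to the limit in all three terms on the right finishes with no auxiliary parameter. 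You instead mollify the indicator $\mathbf{1}_{(a,b]}$ by a continuous cutoff $g_\eta$, control the boundary-strip mass via the Portmanteau theorem at the continuity points $a\pm\eta$, $b\pm\eta$, and then let $\eta\downarrow0$; this is valid and is the textbook-style mollification, but it needs a nested limit and takes Portmanteau (convergence on Borel sets with $\mu$-null boundary) as a black box, whereas the paper re-derives the one special case it uses --- CDF convergence at continuity points --- so that the appendix stays self-contained. For the second assertion your truncation at $T\in\mathcal{C}$ with $\mu((T,\infty))<\varepsilon$ and the tail bound $C\mu_n((T,\infty))$ is exactly the ``similarly'' the paper leaves implicit. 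In short, no gaps: the paper's route is a touch slicker (exact decomposition, single limit, self-contained input), yours is a legitimate and more conventional alternative.
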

\begin{proof}
	Assume $\mu_n\to\mu$. 
	Define the cumulative distribution functions $F_n$ and $F$ of $\mu_n$ and $\mu$ 
	as $F_n(x) := \mu_n((-\infty,x])$ and $F(x) := \mu((-\infty,x])$, respectively. 
	First, we prove that any point $a\in\mathcal{C}$ satisfies $\lim_{n\to\infty} F_n(a) = F(a)$. 
	Take an arbitrary positive number $\epsilon$ and 
	define the two bounded continuous functions $h_\pm$ as 
	\[
	h_+(x) := 
	\begin{cases}
		1 & x\le a,\\
		(a+\epsilon-x)/\epsilon & a<x<a+\epsilon,\\
		0 & x\ge a+\epsilon,
	\end{cases}
	\quad
	h_-(x) := 
	\begin{cases}
		1 & x\le a-\epsilon,\\
		(a-x)/\epsilon & a-\epsilon<x<a,\\
		0 & x\ge a.
	\end{cases}
	\]
	Then, since the inequalities $F_n(a) \le \int_{\mathbb{R}} h_+(x)\,\mu_n(dx)$ and 
	$F_n(a) \ge \int_{\mathbb{R}} h_-(x)\,\mu_n(dx)$ hold, by using the assumption $\mu_n\to\mu$, we obtain 
	\begin{gather*}
		\limsup_{n\to\infty} F_n(a)
		\le \limsup_{n\to\infty} \int_{\mathbb{R}} h_+(x)\,\mu_n(dx)
		= \int_{\mathbb{R}} h_+(x)\,\mu(dx)
		\le F(a+\epsilon),\\
		\liminf_{n\to\infty} F_n(a)
		\ge \liminf_{n\to\infty} \int_{\mathbb{R}} h_-(x)\,\mu_n(dx)
		= \int_{\mathbb{R}} h_-(x)\,\mu(dx)
		\ge F(a-\epsilon).
	\end{gather*}
	By taking the limit $\epsilon\downarrow0$, these inequalities turn out 
	$\limsup_{n\to\infty} F_n(a)\le F(a)$ and $\liminf_{n\to\infty} F_n(a)\ge F(a)$ 
	because of $a\in\mathcal{C}$. 
	That is, the desired equation $\lim_{n\to\infty} F_n(a) = F(a)$ holds.
	\par
	Next, we prove the first equation in this lemma. 
	Let $f: \mathbb{R}\to\mathbb{R}$ be a bounded continuous function and 
	$a,b\in\mathcal{C}$ be two real numbers satisfying $a<b$. 
	Then, we define the function $h: \mathbb{R}\to\mathbb{R}$ as 
	\[
	h(x) := 
	\begin{cases}
		f(a) & x\le a,\\
		f(x) & a<x<b,\\
		f(b) & x\ge b.
	\end{cases}
	\]
	Since the assumption $\mu_n\to\mu$ and the above proof imply that 
	$\lim_{n\to\infty} \int_{\mathbb{R}} h(x)\,\mu_n(dx) = \int_{\mathbb{R}} h(x)\,\mu(dx)$ and 
	$\lim_{n\to\infty} F_n(x) = F(x)$ with $x=a,b$, respectively, we obtain 
	\begin{align*}
		&\lim_{n\to\infty} \int_{(a,b]} f(x)\,\mu_n(dx) + f(a)F(a) + f(b)(1-F(b))\\
		=& \lim_{n\to\infty} \Bigl[ \int_{(a,b]} f(x)\,\mu_n(dx) + f(a)F_n(a) + f(b)(1-F_n(b)) \Bigr]\\
		=& \lim_{n\to\infty} \int_{\mathbb{R}} h(x)\,\mu_n(dx)
		= \int_{\mathbb{R}} h(x)\,\mu(dx)\\
		=& \int_{(a,b]} f(x)\,\mu(dx) + f(a)F(a) + f(b)(1-F(b)).
	\end{align*}
	Therefore, $\lim_{n\to\infty} \int_{(a,b]} f(x)\,\mu_n(dx) = \int_{(a,b]} f(x)\,\mu(dx)$ holds. 
	Similarly, the second equation in this lemma can also be shown.
\end{proof}

\begin{lemma}[Covergence of moments]\label{moment}
	Let $\mu_n$ and $\mu$ be probability distributions on $\mathbb{R}$. 
	If the cumulant generating functions of $\mu_n$ and $\mu$ exist 
	and the cumulant generating functions of $\mu_n$ converge pointwise to that of $\mu$, 
	then any natural number $k$ satisfies 
	\begin{equation*}
		\lim_{n\to\infty} \int_\mathbb{R} x^k\,\mu_n(dx) = \int_\mathbb{R} x^k\,\mu(dx).
	\end{equation*}
\end{lemma}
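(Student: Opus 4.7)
The plan is to establish weak convergence $\mu_n \to \mu$ and then combine a uniform exponential tail bound with Lemma~\ref{bounded} to transfer this convergence to moments. The argument has three pieces.

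First, I would upgrade pointwise convergence of the cumulant generating functions $\phi_n$ to weak convergence $\mu_n \to \mu$. The moment generating functions $M_n(\theta) = e^{\phi_n(\theta)}$ exist on a common open interval $(-\theta_1, \theta_1)$ around $0$ and extend to analytic functions on the strip $\{z \in \mathbb{C} : |\mathrm{Re}\, z| < \theta_1\}$. Since each $M_n$ is log-convex on the real axis, pointwise convergence $M_n \to M$ on $(-\theta_1, \theta_1)$ forces uniform boundedness on compact subintervals, and the elementary estimate $|M_n(\theta + it)| \le M_n(\theta)$ propagates this to uniform boundedness on every sub-strip $\{|\mathrm{Re}\, z| \le R\}$ with $R < \theta_1$. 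Vitali's theorem then yields pointwise convergence of the characteristic functions $M_n(it) \to M(it)$, and L\'evy's continuity theorem delivers $\mu_n \to \mu$.

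Second, I would extract a uniform exponential moment bound. For any fixed $\theta_0 \in (0, \theta_1)$, the convergent sequence $M_n(\pm \theta_0)$ is bounded, and the elementary inequality $e^{\theta_0|x|} \le e^{\theta_0 x} + e^{-\theta_0 x}$ gives $C := \sup_n \int e^{\theta_0 |x|}\, d\mu_n < \infty$; the same bound holds for $\mu$. Because $|x|^k e^{-\theta_0|x|/2}$ is bounded in $x$ and decays to $0$ as $|x| \to \infty$, this yields
\begin{equation*}
\sup_n \int_{|x|>a} |x|^k\, d\mu_n \le a^k e^{-\theta_0 a/2}\, C \longrightarrow 0 \quad (a \to \infty),
\end{equation*}
together with the analogous bound for $\mu$.

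Third, I would finish by truncation. Pick $a > 0$ with both $\pm a$ continuity points of the cumulative distribution of $\mu$; Lemma~\ref{bounded} applied to the continuous function $x \mapsto x^k$ on $(-a, a]$ delivers $\int_{(-a,a]} x^k\, d\mu_n \to \int_{(-a,a]} x^k\, d\mu$. Combined with the uniform tail bound of the previous step, a standard $\epsilon/3$-argument (fix $a$ large enough that both tails are less than $\epsilon/3$, then send $n \to \infty$) concludes that $\int_{\mathbb{R}} x^k\, d\mu_n \to \int_{\mathbb{R}} x^k\, d\mu$.

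I expect the main obstacle to be the weak convergence step: one must carefully justify that pointwise real-axis convergence of the $\phi_n$ implies pointwise convergence of the characteristic functions, which is where the Vitali-theorem machinery together with the convexity-based uniform strip bound does the real work. A secondary concern is verifying that $\mu$ itself inherits an exponential moment from the limiting $\phi$, which follows from existence of $\phi$ on a neighborhood of $0$.
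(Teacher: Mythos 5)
Your argument is correct, but it is organized differently from the paper's. The paper simply asserts that convergence of the cumulant generating functions implies $\mu_n\to\mu$ (citing it as well known), and then works one half-line at a time: for a continuity point $a$ it deduces $\int_{(a,\infty)}e^{\theta(x-a)}\,\mu_n(dx)\to\int_{(a,\infty)}e^{\theta(x-a)}\,\mu(dx)$ from Lemma~\ref{bounded} plus the convergence of the moment generating functions, bounds $(x-a)^k\le\bigl((e^{\theta(x-a)}-1)/\theta\bigr)^k$ on $(a,\infty)$ to get a $\limsup$ estimate, removes $\theta$ by monotone convergence as $\theta\downarrow0$, and gets the matching $\liminf$ by truncating at a continuity point $b$ and letting $b\to\infty$; no uniform-in-$n$ tail estimate and no complex analysis appear. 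You instead (i) prove the weak-convergence step in full via analytic extension of $M_n$ to a strip, the bound $|M_n(\theta+it)|\le M_n(\theta)$ with convexity, Vitali's theorem, and L\'evy's continuity theorem, and (ii) replace the paper's sandwich by a uniform exponential moment bound giving uniform integrability of $|x|^k$, finishing with Lemma~\ref{bounded} on $(-a,a]$ and an $\epsilon/3$ truncation. Your route buys a self-contained justification of the step the paper leaves implicit (essentially Curtiss' theorem), at the cost of heavier machinery; the paper's route is more elementary once weak convergence is granted and needs only the one-sided exponential domination, which is why it can treat the two tails separately without any uniform bound. One cosmetic point in your step two: $\sup_{|x|\ge a}|x|^k e^{-\theta_0|x|/2}=a^k e^{-\theta_0 a/2}$ only once $a\ge 2k/\theta_0$, and the tail bound should be stated for $|x|\ge a$ so that it covers the endpoint of $(-\infty,-a]$; both are harmless since you send $a\to\infty$.
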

\begin{proof}
	Assume the cumulant generating functions of $\mu_n$ converge that of $\mu$, which implies $\mu_n\to\mu$. 
	Let $\theta$ be an arbitrary positive number and take a continuous point $a\in\mathcal{C}$. 
	It is sufficient to prove 
	$\lim_{n\to\infty} \int_\mathbb{R} (x-a)^k\,\mu_n(dx) = \int_\mathbb{R} (x-a)^k\,\mu(dx)$ 
	for any natural number $k$. 
	Since $e^{\theta(x-a)}$ is a continuous function, 
	Lemma~\ref{bounded} implies 
	\begin{equation*}
		\lim_{n\to\infty} \int_{(-\infty,a]} e^{\theta(x-a)}\,\mu_n(dx) = \int_{(-\infty,a]} e^{\theta(x-a)}\,\mu(dx).
	\end{equation*}
	This equation and the convergence of the cumulant generating functions imply 
	\begin{equation*}
		\lim_{n\to\infty} \int_{(a,\infty)} e^{\theta(x-a)}\,\mu_n(dx) = \int_{(a,\infty)} e^{\theta(x-a)}\,\mu(dx).
	\end{equation*}
	Combining the above equation and the inequality $e^{\theta(x-a)} \ge \theta(x-a) + 1$, we have 
	\begin{align*}
		\limsup_{n\to\infty} \int_{(a,\infty)} (x-a)^k\,\mu_n(dx)
		&\le \limsup_{n\to\infty} \int_{(a,\infty)} \Bigl( \frac{e^{\theta(x-a)} - 1}{\theta} \Bigr)^k\,\mu_n(dx)\\
		&= \int_{(a,\infty)} \Bigl( \frac{e^{\theta(x-a)} - 1}{\theta} \Bigr)^k\,\mu(dx).
	\end{align*}
	Since $(e^{\theta(x-a)} - 1)/\theta$ monotonically increases with respect to $\theta$, 
	the monotone convergence theorem yields 
	\begin{equation*}
		\int_{(a,\infty)} \Bigl( \frac{e^{\theta(x-a)} - 1}{\theta} \Bigr)^k\,\mu(dx)
		\xrightarrow{\theta\downarrow0} \int_{(a,\infty)} (x-a)^k\,\mu(dx).
	\end{equation*}
	Hence, 
	\begin{equation*}
		\limsup_{n\to\infty} \int_{(a,\infty)} (x-a)^k\,\mu_n(dx) \le \int_{(a,\infty)} (x-a)^k\,\mu(dx).
	\end{equation*}
	\par
	Take an arbitrary real number $b\in\mathcal{C}$ satisfying $a<b$. 
	Since $(x-a)^k$ is a continuous function, 
	Lemma~\ref{bounded} implies 
	\[
	\liminf_{n\to\infty} \int_{(a,\infty)} (x-a)^k\,\mu_n(dx)
	\ge \liminf_{n\to\infty} \int_{(a,b]} (x-a)^k\,\mu_n(dx)
	= \int_{(a,b]} (x-a)^k\,\mu(dx).
	\]
	Note that $\mathbb{R}\setminus\mathcal{C}$ is a countable set because the cumulative distribution function $F$ monotonically increases. 
	Thus, we can take the limit $b\to\infty$ while satisfying $b\in\mathcal{C}$. 
	Taking this limit, we have 
	\begin{equation*}
		\liminf_{n\to\infty} \int_{(a,\infty)} (x-a)^k\,\mu_n(dx) \ge \int_{(a,\infty)} (x-a)^k\,\mu(dx).
	\end{equation*}
	Thus, the equation $\lim_{n\to\infty} \int_{(a,\infty)} (x-a)^k\,\mu_n(dx) = \int_{(a,\infty)} (x-a)^k\,\mu(dx)$ holds. 
	The other equation $\lim_{n\to\infty} \int_{(-\infty,a]} (x-a)^k\,\mu_n(dx) = \int_{(-\infty,a]} (x-a)^k\,\mu(dx)$ 
	can be also shown in a similar way. 
	From these equations, we obtain the desired equation.
\end{proof}

\color{black}
\section*{Acknowledgments}
MH is very grateful to 
Professor Shun Watanabe for helpful discussions and comments.
He was supported in part by a JSPS Grant-in-Aids for Scientific Research (B) No.16KT0017
and for Scientific Research (A) No.17H01280, 
the Okawa Research Grant and Kayamori Foundation of Information Science Advancement.

\section*{References}

\end{document}